\theoremstyle{definition}
\newtheorem{definition}{Definition}
\newtheorem{theorem}{Theorem}
\newtheorem{corollary}{Corollary}
\begin{document}

\title{Incremental Model Transformations with Triple Graph Grammars for Multi-version Models
\thanks{This work was developed mainly in the course of the project modular and incremental Global Model Management (project number 336677879) funded by the DFG.}
}

\author {
	\IEEEauthorblockN{
		1\textsuperscript{st} Matthias Barkowsky
	}
	\IEEEauthorblockA{
		\textit{System Analysis and Modeling Group} \\
		\textit{Hasso-Plattner Institute at the University of Potsdam}\\
		Prof.-Dr.-Helmert-Str. 2-3, D-14482 Potsdam, Germany \\
		matthias.barkowsky@hpi.de
	}
	\and
	\IEEEauthorblockN {
		2\textsuperscript{nd} Holger Giese
	}
	\IEEEauthorblockA {
		\textit{System Analysis and Modeling Group} \\
		\textit{Hasso-Plattner Institute at the University of Potsdam}\\
		Prof.-Dr.-Helmert-Str. 2-3, D-14482 Potsdam, Germany \\
		holger.giese@hpi.de
	}
}

\maketitle

\begin{abstract}
Like conventional software projects, projects in model-driven software engineering require adequate management of multiple versions of development artifacts, importantly allowing living with temporary inconsistencies. In previous work, multi-version models for model-driven software engineering have been introduced, which allow checking well-formedness and finding merge conflicts for multiple versions of a model at once. However, also for multi-version models, situations where different artifacts, that is, different models, are linked via automatic model transformations have to be handled.
 
In this paper, we propose a technique for jointly handling the transformation of multiple versions of a source model into corresponding versions of a target model, which enables the use of a more compact representation that may afford improved execution time of both the transformation and further analysis operations. Our approach is based on the well-known formalism of triple graph grammars and the aforementioned encoding of model version histories called multi-version models. In addition to batch transformation of an entire model version history, the technique also covers incremental synchronization of changes in the framework of multi-version models.
 
We show the correctness of our approach with respect to the standard semantics of triple graph grammars and conduct an empirical evaluation to investigate the performance of our technique regarding execution time and memory consumption. Our results indicate that the proposed technique affords lower memory consumption and may improve execution time for batch transformation of large version histories, but can also come with computational overhead in unfavorable cases.
\end{abstract}

\begin{IEEEkeywords}
Multi-version Models,
Triple Graph Grammars,
Incremental Model Transformation
\end{IEEEkeywords}

\section{Introduction} \label{sec:introduction}

In model-driven software development, models are treated as primary development artifacts. Complex projects can involve multiple models, which describe the system under development at different levels of abstraction or with respect to different system aspects and can be edited independently by a team of developers. In this case, consistency of the holistic system description is ensured by batch model transformations, which automatically derive new models from existing ones, and incremental model transformations, that is, model synchronizations, which propagate changes to a transformation's source model to the transformation's target model\cite{seibel2010dynamic}.

Similarly to program code, the evolution of models via changes by different developers requires management of the resulting versions of the software description. In particular, version management has to support parallel development activities of multiple developers working on the same artifact, where living with inconsistencies may temporarily be necessary to avoid loss of information \cite{Finkelstein+1994}. In \cite{Barkowsky2022}, we have introduced multi-version models as a means of managing multiple versions of the same model that also enables monitoring the consistency of the individual model versions and potential merge results of versions developed in parallel.

However, with model transformations effectively linking multiple models via consistency relationships, considering only the evolution of a single model without its context is insufficient for larger model-driven software development projects. Thus, a mechanism for establishing consistency of different versions of such linked models that allows parallel development of multiple versions is required. On the one hand, this requires support for transforming multiple versions of a source model into the corresponding target model versions, for instance when a new kind of model is introduced to the development process. On the other hand, the inherently incremental scenario of development with multiple versions calls for efficient synchronization of changes between versions of related models. In order to achieve efficient transformation and synchronization of models with multiple versions and enable further analysis operations as described in \cite{Barkowsky2022}, a close integration of this transformation and synchronization mechanism and version handling seems desirable.

Therefore, in this paper\footnote{See \cite{report} for the technical report version of the paper.}, we explore a step in the direction of model transformations for multi-version models by adapting the well-known formalism of triple graph grammars, which enables the implementation of single-version model transformations and synchronizations, to the multi-version case.

The remainder of the paper is structured as follows: In Section \ref{sec:preliminaries}, we reiterate the concepts of graphs, graph transformations, triple graph grammars, and multi-version models. We subsequently present our approach for deriving transformation rules that work on multi-version models from single-version transformation specifications in the form of triple graph grammars in Section \ref{sec:mv_rules_derivation}. In Section \ref{sec:mv_transformation_execution}, we describe how the derived rules can be used to realize joint transformation of all model versions encoded in a multi-version model and prove the technique's correctness with respect to the semantics of triple graph grammars. Section \ref{sec:mv_incremental_execution} discusses the extension of the approach to incremental model synchronization. Section \ref{sec:evaluation} reports on results of an initial evaluation of the solution's performance based on an application scenario in the software development domain. Related work is discussed in Section \ref{sec:related_work}, before Section \ref{sec:conclusion} concludes the paper.

\section{Preliminaries} \label{sec:preliminaries}

In this section, we give an overview of required preliminaries regarding graphs and graph transformations, triple graph grammars, and multi-version models.

\subsection{Graphs and Graph Transformations}

We briefly reiterate the concepts of graphs, graph morphisms and graph transformations and their typed analogs as defined in \cite{Ehrig+2006} and required in the remainder of the paper.

A graph $G = (V^G, E^G, s^G, t^G)$ is given by a set of nodes $V^G$, a set of edges $E^G$ and two functions $s^G: E^G \rightarrow V^G$ and $t^G: E^G \rightarrow V^G$ assigning each edge a source and target node. A graph morphism $m: G \rightarrow H$ consists of two functions $m^V: V^G \rightarrow V^H$ and $m^E: E^G \rightarrow E^H$ such that $s^H \circ m^E = m^V \circ s^G$ and $t^H \circ m^E = m^V \circ t^G$. We call $m^V$ the \emph{vertex morphism} and $m^E$ the \emph{edge morphism}.

A typed graph $G^T = (G, \mathit{type}^G)$ comprises a graph $G$ along with a typing morphism $\mathit{type}: G \rightarrow TG$ into a type graph $TG$. In this paper, we consider a model to be a typed graph, with the type graph defining a modeling language by acting as a metamodel. A typed graph morphism from a typed graph $G^T = (G, \mathit{type}^G)$ into a typed graph $H^T = (H, \mathit{type}^H)$ with the same type graph is a graph morphism $m^T: G \rightarrow H$ such that $\mathit{type}^G = \mathit{type}^H \circ m^T$. A (typed) graph morphism $m$ with injective functions $m^V$ and $m^E$ is called a monomorphism. If $m^V$ and $m^E$ are also surjective, $m$ is called an isomorphism.

Figure \ref{fig:example_graph_type_graph} shows an example typed graph on the left along with the corresponding type graph on the right. The typing morphism is encoded by the node's labels. The graph represents an abstract syntax graph of a program written in an object-oriented programming language. Nodes may represent class declarations ($ClassDecl$), field declarations ($FieldDecl$) or type accesses ($TypeAccess$). Class declarations can contain field declarations via edges of type $declaration$, whereas field declarations can reference a class declaration as the field type via a $TypeAccess$ node and edges of type $access$ and $type$. The graph contains two class declarations, one of which contains a field declaration, the field type of which is given by the other class declaration.

\begin{figure}
\centering
\includegraphics[height=0.2\linewidth]{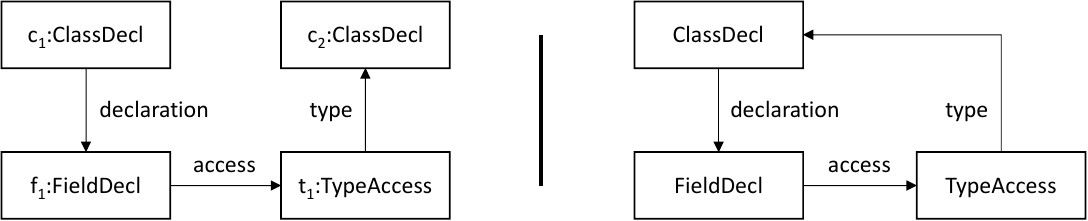}
\caption{Example graph (left) and type graph (right)} \label{fig:example_graph_type_graph}
\end{figure}

A (typed) graph transformation rule $\gamma$ is characterized by a span of (typed) graph monomorphisms $L \xleftarrow{l} K \xrightarrow{r} R$ and can be applied to a graph $G$ via a monomorphism $m : L \rightarrow G$ called match that satisfies the so-called dangling condition \cite{Ehrig+2006}. The result graph $H$ of the rule application is then formally defined by a double pushout over an intermediate graph \cite{Ehrig+2006}. We denote the application of $\gamma$ via $m$ by $G \rightarrow^\gamma_m H$. Intuitively, the application of $\gamma$ deletes the elements in $m(L)$ that do not have a corresponding element in $R$ and creates new elements for elements in $R$ that do not have a corresponding element in $L$. The graph $L$ is called the rule's \emph{left-hand side}, $K$ is called the rule's \emph{glueing graph}, and $R$ is called the \emph{right-hand side}.

$\gamma$ is a graph production if it does not delete any elements, that is, $l$ is surjective. In this case, since $L$ and $K$ are isomorphic, we also use the simplified representation $L \xrightarrow{r} R$.

Figure \ref{fig:example_gt_rule} shows an example graph production in shorthand notation, where preserved elements are colored black, whereas created elements are colored green and marked by an additional ``++'' label. For two existing classes, the production creates a field declaration in one of them that references the other class declaration as the field type.

\begin{figure}
\centering
\includegraphics[height=0.2\linewidth]{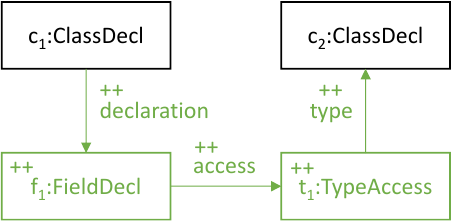}
\caption{Example graph transformation rule in shorthand notation} \label{fig:example_gt_rule}
\end{figure}

We denote a sequence of applications of rules from a set of rules $\Gamma$ to a graph $G$ with resulting graph $G'$ by $G \rightarrow^{\Gamma} G'$. We say that such a rule application sequence is maximal if it cannot be extended by any application of a rule from $\Gamma$.

\begin{definition} \textit{Maximal Rule Application Sequence}
A sequence of rule applications $G \rightarrow^{\Gamma} G'$ with a set of graph transformation rules $\Gamma$ is maximal if no rule in $\Gamma$ is applicable to $G'$.
\end{definition}

\subsection{Triple Graph Grammars}

Triple graph grammars (TGGs) were initially presented by Schuerr \cite{Sch94_2_ref}. This paper is based on the slightly adapted version introduced in \cite{Giese+2014}. A TGG relates a source and a target modeling language via a correspondence modeling language and is characterized by a set of TGG rules. In \cite{Giese+2014}, a TGG rule is defined by a graph production that simultaneously transforms connected graphs from the source, correspondence and target modeling language into a consistently modified graph triplet. The set of TGG rules has to include an axiom rule, which has a triplet of empty graphs as its left-hand side and defines a triplet of starting graphs via its right-hand side.

The left-hand side of a TGG rule $\gamma = L \xrightarrow{r} R$ can be divided into the source, correspondence, and target domains $L_S$, $L_C$, and $L_T$ respectively, with $L_S \subseteq L$, $L_C \subseteq L$, and $L_R \subseteq L$ and $L_S \uplus L_C \uplus L_R = L$. The right-hand side can similarly be divided into three domains $R_S$, $R_C$, and $R_T$. The type graph for graph triplets and TGG rules is hence given by the union of the type graphs defining the source, correspondence, and target language along with additional edges connecting nodes in the correspondence language to elements in the source and target language. It is also assumed that each element in $L_S$, $L_T$, $R_S$, and $R_T$ is connected to exactly one node in $L_C$ or $R_C$ and that each rule creates exactly one node in the correspondence domain.

TGGs can be employed to transform a model of the source language into a model of the target language. This requires the derivation of so-called forward rules from the set of TGG rules. A forward rule for a TGG rule $\gamma = L \xrightarrow{r} R$ can be constructed as $\gamma^F = L^F \xleftarrow{id} L^F \xrightarrow{r^F} R$, where $L^F = L \cup (R_S \setminus r(L))$ and $r^F = r \cup id$, with $id$ the identity morphism. Intuitively, $\gamma^F$ already requires the existence of the elements in the source domain that would be created by an application of $\gamma$ and only creates elements in the correspondence and target domain. In the following, we also denote the subgraph of a forward rule that corresponds to the subgraph that is newly transformed by the rule by $L^T = L^F \setminus L$.

Additionally, the derivation of a forward rule requires a technical extension to avoid redundant translation of the same element. Therefore, a dedicated \emph{bookkeeping node}, which is connected to every currently untranslated source element via a \emph{bookkeeping edge}, is introduced. Then, a bookkeeping node and bookkeeping edges to all elements in $L^T$ are added to the forward rule's left-hand side. The bookkeeping node is also added to the rule's glueing graph and right-hand side. The application of the forward rule via $m$ thus requires that elements in $m(L^T)$ are untranslated, as indicated by the presence of bookkeeping edges, and marks them as translated by deleting the adjacent bookkeeping edges.

Note that, in order to allow bookkeeping edges and outgoing edges of correspondence nodes to target regular edges, a slightly extended graph model is used, which is detailed in \cite{giese2010toward}. In this paper, we will call such graphs graphs with bookkeeping. We say that two graphs with bookkeeping $G$ and $H$ are equal up to isomorphism including bookkeeping if and only if there exists an isomorphism $iso : G \rightarrow H$, which entails that for all nodes and edges $x \in G$, it holds that $\exists b \in E'^{G}: t'^{G}(b) = x \leftrightarrow \exists b' \in E'^{H}: t'^{H}(b') = iso(x)$, where $E'^{G}$ denotes the set of bookkeeping edges and $t'^{G}$ the related target function.

Figure \ref{fig:example_tgg_rule} shows a TGG rule for linking the language for abstract syntax graphs given by the type graph in Figure \ref{fig:example_graph_type_graph} to a modeling language for class diagrams given by the type graph on the right in Figure \ref{fig:example_type_graphs_tgg}, using the correspondence language on the left. The rule simultaneously creates a $FieldDecl$ and $TypeAccess$ along with associated edges in the source domain (labeled S) and a corresponding $Association$ with associated edges in the target domain (labeled T), which are linked via a newly created correspondence node of type $CorrField$ in the correspondence domain (labeled C). Edges from the correspondence node to other edges are omitted for readability.

\begin{figure}
\centering
\includegraphics[width=0.85\linewidth]{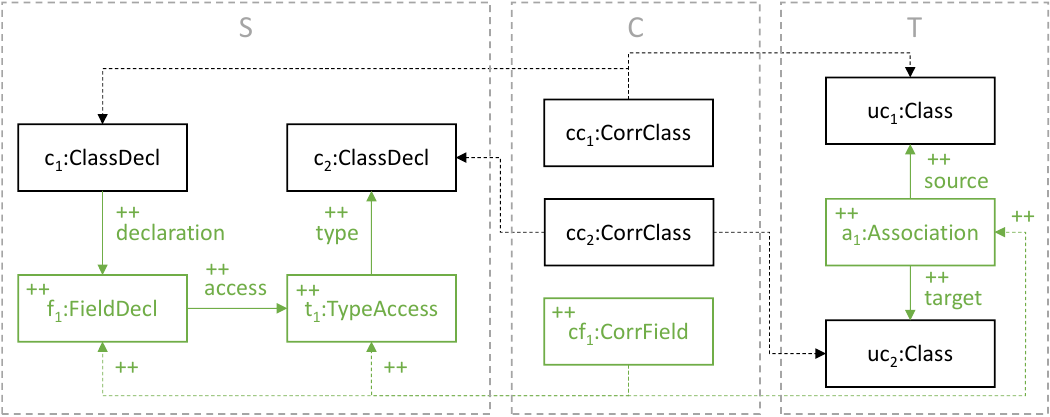}
\caption{Example TGG rule in shorthand notation} \label{fig:example_tgg_rule}
\end{figure}

\begin{figure}
\centering
\includegraphics[width=0.4\linewidth]{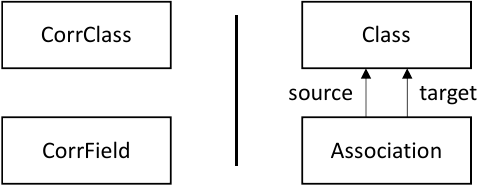}
\caption{Example type graphs for the TGG rule in Figure \ref{fig:example_tgg_rule}} \label{fig:example_type_graphs_tgg}
\end{figure}

Figure \ref{fig:example_forward_rule} shows the forward rule derived from the TGG rule in Figure \ref{fig:example_tgg_rule}. The elements $f_1$ and $t_1$ and adjacent edges are no longer created but preserved instead. Also, the rule contains a bookkeeping node and adjacent bookkeeping edges to these elements. The rule's application then deletes these bookkeeping edges and creates the corresponding elements in the target domain along with the linking node $cf_1$ and edges in the correspondence domain. The bookkeeping mechanism is however not visualized for readability reasons. 

\begin{figure}
\centering
\includegraphics[width=0.85\linewidth]{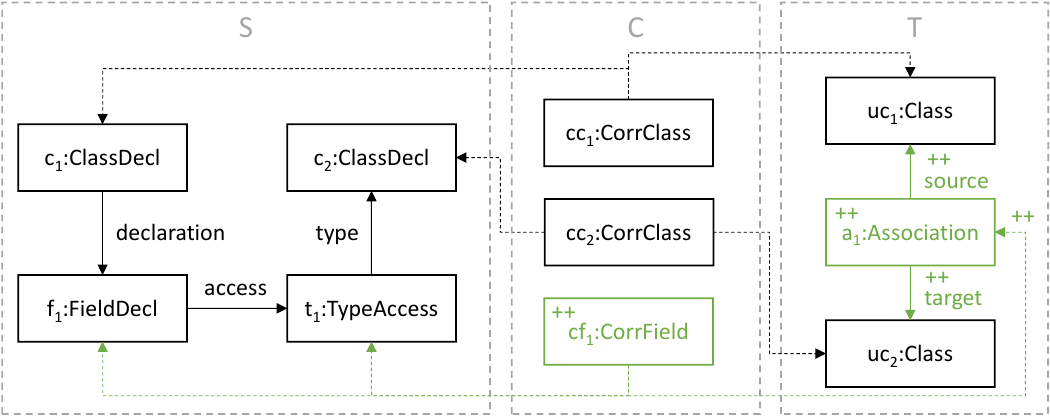}
\caption{Example forward rule derived from the TGG rule in Figure \ref{fig:example_tgg_rule}, with the bookkeeping mechanism omitted for readability reasons} \label{fig:example_forward_rule}
\end{figure}

TGGs can also be used to perform a transformation from the target to the source language by means of similarly derived backward rules. In the following, we will focus on the forward case. However, the backward case simply works analogously.

A TGG without any critical pairs \cite{Ehrig+2006} among its rules is called \emph{deterministic} \cite{Giese+2014}. A forward transformation with a deterministic TGG can be executed via an operation $trans^{F}$, which adds a bookkeeping node and bookkeeping edges to all elements in the soruce mode and then applies the TGG's forward rules for as long as there is a match for any of them. For a deterministic TGG with a set of forward rules $\Gamma$ and a starting model triplet $SCT$, any produced maximal rule application sequence $SCT \rightarrow^{\Gamma} SCT'$ then constitutes a correct model transformation and yields the same result if it deletes all bookkeeping edges in $SCT$. In this paper, we will focus on such deterministic TGGs, which allow for efficient practical implementations that avoid potentially expensive undoing of forward rule applications and backtracking \cite{Giese+2014}.

In addition to a full batch transformation of a previously untransformed model, TGGs also enable incremental synchronization of changes to an already transformed model to the transformation result. In the most basic case, this involves undoing all forward rule applications that are invalidated by the deletion of elements in a first step. In a second step, elements that are no longer covered as well as newly created elements are then transformed via the TGG's forward rules.

\subsection{Multi-version Models}

In this paper, we consider models in the form of typed graphs. A model modification can in this context be represented by a span of morphisms $M \leftarrow K \rightarrow M'$, where $M$ is the original model, which is modified into a changed model $M'$ via an intermediate model $K$ \cite{taentzer2014fundamental}. A \emph{version history} of a model is then given by a set of model modifications $\Delta^{M_{\{1,...,n\}}}$ between models $M_1, M_2, ..., M_n$ with type graph $TM$. We call a version history with a unique initial version and acyclic model modification relationships between the individual versions a \emph{correct} version history.

In \cite{Barkowsky2022}, we have introduced \emph{multi-version models} as a means of encoding such a version history in a single consolidated graph. Therefore, an adapted version of $TM$, $TM_{mv}$, is created. To represent model structure, $TM_{mv}$ contains a node for each node and each edge in $TM$. Source and target relationships of edges in $TM$ are represented by edges in $TM_{mv}$. In addition, a $version$ node with a reflexive $suc$ edge is added to $TM_{mv}$, which allows the materialization of the version history's version graph. The version graph and the model structure are linked via $cv_v$ and $dv_v$ edges from each node $v$ in $TM_{mv}$ to the $version$ node.

Figure \ref{fig:example_type_graph_adapted} displays the adaptation of the type graph from Figure \ref{fig:example_graph_type_graph}. $cv$ and $dv$ edges are omitted for readability reasons.

\begin{figure}
\centering
\includegraphics[width=0.9\linewidth]{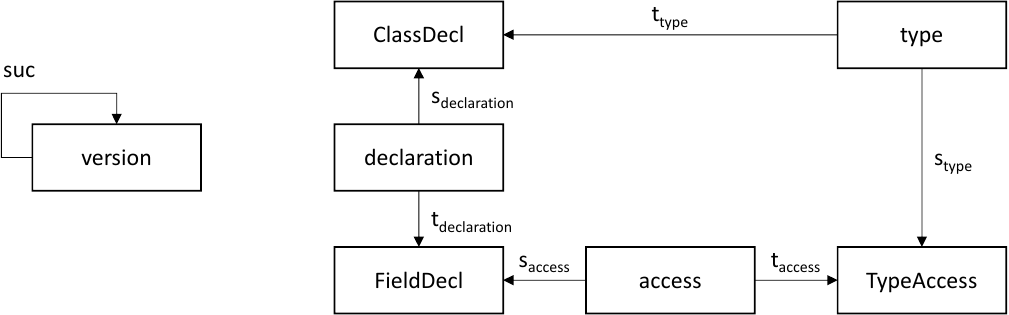}
\caption{example adapted type graph derived from the type graph in Figure \ref{fig:example_graph_type_graph}, with $cv$ and $dv$ edges omitted for readability reasons} \label{fig:example_type_graph_adapted}
\end{figure}

$TM_{mv}$ allows the translation of $\Delta^{M_{\{1,...,n\}}}$ into a single typed graph $MVM$ conforming to $TM_{mv}$, which is called a \emph{multi-version model}, via a procedure $comb$. This translation yields a bijective function $origin: V^{MVM} \rightarrow \bigcup_{i \in \{1, 2, ..., n\}} V^{M_i} \cup E^{M_i}$ mapping the nodes in $MVM$ to their respective original element. An individual model version can be extracted from $MVM$ via the projection operation $proj(MVM, i) = M_i$. Finally for a node $v_{mv} \in V^{MVM}$, the set of model versions that include the element $origin(v_{mv})$ can be computed via the function $p$, with $p(v_{mv}) = \{M_i \in \{M_1, M_2, ..., M_n\} |  origin(v_{mv}) \in M_i\}$.

\section{Derivation of Multi-version Transformation Rules from Triple Graph Grammars} \label{sec:mv_rules_derivation}

The transformation of the individual model versions encoded in a multi-version model with a TGG can trivially be realized via the projection operation $proj$. However, the multi-version model may in practice afford a more compact representation compared to an explicit enumeration of all model versions, as derived via $proj$.

In such practical application scenarios, operations concerning all model versions that directly work on the multi-version model may therefore also perform better regarding execution time than the corresponding operations on individual model versions, as has already been demonstrated for the case of pattern matching in \cite{Barkowsky2022}. Since pattern matching also constitutes an important task in model transformation via TGGs, a direct, joint translation of all model versions based on the multi-version model representation seems desirable.

Given a TGG, graph transformation rules for the joint translation of all source or target model versions encoded in a multi-version model can be derived from the regular translation rules in a straightforward manner. In the following, we will discuss the deriviation for forward translation. Rules for the backward case can be derived analogously.

First, the adapted multi-version type graph for the TGG's merged source, correspondence and target type graph is created via the translation procedure described in \cite{Barkowsky2022}. However, edges between the correspondence and source or target type graph are simply translated as edges rather than nodes. The resulting adapted type graph $TG_{mv}$ for multi-version models is also extended by two additional edges, $ucv_v$ and $udv_v$, for each node $v$ in the source domain of the merged type graph. Source and target of these edges are given by $s^{TG_{mv}}(ucv_v) = s^{TG_{mv}}(udv_v) = v$ and $t^{TG_{mv}}(ucv_v) = t^{TG_{mv}}(udv_v) = version$, where $version$ is the dedicated version node.

Analogously to the bookkeeping edges in the original type graph, these edges will be used to encode in which versions an element represented by a node $v_{mv}$ with type $v$ has already been translated. We therefore define the set of versions where $v_{mv}$ has not been translated yet $u(v_{mv})$ analogously to the set of versions $p(v_{mv})$ where $v_{mv}$ is present  \cite{Barkowsky2022}, except that $ucv_v$ and $udv_v$ replace $cv_v$ and $dv_v$ in the definition.

Then, for each forward rule $\gamma = L \xleftarrow{l} K \xrightarrow{r} R$ a corresponding multi-version forward rule is created via a procedure $adapt$, with $adapt(\gamma) = trans'(L) \xleftarrow{l_{mv}} trans'(K) \xrightarrow{r_{mv}} trans'(R)$. The vertex morphism of $l_{mv}$ is given by $l_{mv}^V = origin^{-1}\,\circ\,l\,\circ\,origin$ and the edge morphisms by $l_{mv}^E = s \circ origin^{-1} \circ l^E \circ origin \circ s^{-1}$ and $l_{mv}^E = t \circ origin^{-1} \circ l^E \circ origin \circ t^{-1}$ for all edges representing source respectively target relationships. $r_{mv}$ is constructed analogously.

The $trans'$ procedure is a minor adaptation of the $trans$ procedure in \cite{Barkowsky2022}, which ignores the bookkeeping node and bookkeeping edges, and translates correspondence edges to edges rather than nodes, but otherwise works analogously. The bookkeeping mechanism is translated into the additional constraint $P \neq \emptyset$ over $trans'(L)$, where $P = (\bigcap_{v_{mv} \in V^{trans'(L)}} p(v_{mv}) \cap \bigcap_{v_{mv} \in origin^{-1}(L^T)} u(v_{mv})) \setminus \bigcup_{v_{mv} \in V^{trans'(L)} \setminus origin^{-1}(L^T)} u(v_{mv})$.

The application of the adapted rule additionally creates outgoing $cv$ and $dv$ edges for all nodes $v^C_{mv} \in V^{trans(R)} \setminus (origin^{-1}\,\circ\,r\,\circ\,origin)(trans(K))$ to realize the assignment $p(v^C_{mv}) \coloneqq P$. Furthermore, for each $v_{mv} \in origin^{-1}(r(l^{-1}(L^T)))$, the application also adds and deletes outgoing $ucv$ and $udv$ edges to realize the modification $u(v_{mv}) \coloneqq u(v_{mv}) \setminus P$. Note that, since the computation of the $p$ and $u$ sets requires considedring paths of arbitrary length, these computations cannot technically be defined as part of the graph transformation but have to be realized externally.

For a set of forward rules $\Gamma$, $adapt(\Gamma) = \{adapt(\gamma) | \gamma \in \Gamma\}$ denotes the corresponding set of multi-version forward rules.

\section{Execution of Multi-version Transformations (Detailed)} \label{sec:mv_transformation_execution}

The forward transformation of all model versions in a multi-version model $MVM$ according to a TGG can jointly be performed via the TGG's set of multi-version forward rules.

In a first step, all $ucv$ and $udv$ edges in $MVM$ are removed. Then, for each edge $e_{cv} \in E^{MVM}$ with $type(e_{cv}) = cv_{x}$ and $s^{MVM}(e_{cv})$, an edge $e_{ucv}$ with $type(e_{ucv}) = ucv_{x}$ and $s^{MVM}(e_{cv}) = s^{MVM}(e_{ucv})$ and $t^{MVM}(e_{cv}) = t^{MVM}(e_{ucv})$ is created. For all $dv$ edges, corresponding $udv$ edges are created analogously. Thus, after the creation of the $ucv$ and $udv$ edges, it holds that $\forall v_{vm} \in V^{MVM}: u(v_{vm}) = p(v_{vm})$.

Subsequently, the simultaneous transformation of all model versions encoded in $MVM$ is performed similarly to the regular transformation of a single model version via the TGG. More specifically, the adapted forward rules of the TGG are applied to $MVM$ until no such rule is applicable anymore.

In the following, we will argue that this transformation approach is correct in the sense that it yields the same result as the transformation of the individual model versions via regular forward rules. Therefore, we first extend the projection operation $proj$ from \cite{Barkowsky2022} to a bookkeeping-sensitive variant.

\begin{definition} \textit{(Bookkeeping-sensitive Projection)}
For a multi-version model $MVM$ with version graph $V$ and model version $M_t$ with corresponding $m_t \in V^V$, the bookkeeping-sensitive projection operation works similarly to the regular projection operation $proj$, except that it also adds a bookkeeping node and bookkeeping edges to an element $origin(v)$ iff $M_t \in u(v)$ for all $v \in V^{MVM}$. We also denote the result of the bookkeeping-sensitive projection operation by $MVM[t] = proj^{M}(MVM, t)$.
\end{definition}

We also define two sets that represent the bookkeeping during the transformation process.

\begin{definition} \textit{(Bookkeeping Set)}
For a model $M$, we denote the set of translated elements (vertices and edges) by $B(M) = \{x \in M | \nexists b \in E'^{M}: t'^{M} = x\}$, with $E'^{M}$ the set of bookkeeping edges in $M$ and $t'^{M}$ the target function for bookkeeping edges. We also call $B(M)$ the \emph{bookkeeping set} of $M$.
\end{definition}

\begin{definition} \textit{(Projection Bookkeeping Set)}
For a multi-version model $MVM$ and version $t \in V^V$, with $V$ the version graph, we denote the set of already handled elements (vertices and edges) in $MVM[t]$ by $B_{mv}(MVM[t]) = \{x \in MVM[t] | t \notin u(proj^{-1}(x))\}$. We also call $B_{mv}(MVM[t])$ the \emph{projection bookkeeping set} of $MVM[t]$.
\end{definition}

The following theorem then states that, at the start of the transformation process via adapted forward rules, the prepared multi-version model correctly encodes the starting situation for the translation of the individual model versions.

\begin{theorem} \label{the:correctness_bookkeeping_projection}
Given a multi-version model $MVM$ encoding a version history with model versions $M_1, M_2, ..., M_n$ such that $\forall v_{vm} \in V^{MVM}: u(v_{vm}) = p(v_{vm})$, it holds that
$\forall t \in \{1, 2, ..., n\}: MVM[t] = init_F(M_t)$
up to isomorphism including bookkeeping, where $init_F(M_t)$ denotes the graph with bookkeeping resulting from the preparation of $M_t$ for the regular forward transformation, that is, the graph $M_t$ with an added bookkeeping node and bookkeeping edges to all elements in $M_t$.
\end{theorem}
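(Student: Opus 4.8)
The plan is to unpack both sides of the claimed equality directly from the definitions and check that they agree as graphs with bookkeeping. Recall that $\mathit{proj}(MVM, t) = M_t$ already holds by the correctness of the projection operation from \cite{Barkowsky2022}, so the underlying typed graphs of $MVM[t]$ and $init_F(M_t)$ coincide up to isomorphism via $\mathit{origin}$ restricted to $p^{-1}(M_t)$; the only thing left to verify is that the bookkeeping structure matches, i.e.\ that an element $x$ of $M_t$ carries a bookkeeping edge in $MVM[t]$ exactly when it carries one in $init_F(M_t)$.

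First I would fix $t$ and an arbitrary element $origin(v) \in M_t$ with $v \in V^{MVM}$ and $M_t \in p(v)$. By the definition of $init_F$, every element of $M_t$ receives a bookkeeping edge, so $origin(v)$ is bookkept in $init_F(M_t)$. By the definition of the bookkeeping-sensitive projection, $origin(v)$ is bookkept in $MVM[t]$ iff $M_t \in u(v)$. So the two graphs agree on this element precisely when $M_t \in u(v)$. Here I invoke the hypothesis $\forall v_{mv} \in V^{MVM}: u(v_{mv}) = p(v_{mv})$: since $M_t \in p(v)$ by choice of $v$, we get $M_t \in u(v)$, hence $origin(v)$ is bookkept on both sides. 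Conversely, any element not in $M_t$ appears in neither graph, so there is nothing to check. This establishes the pointwise correspondence of bookkeeping edges.

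It then remains to assemble these observations into the isomorphism ``including bookkeeping'' in the sense defined in the preliminaries: I would take the isomorphism $iso$ witnessing $\mathit{proj}(MVM,t) = M_t$ (extended to map the bookkeeping node of $MVM[t]$ to that of $init_F(M_t)$, both of which exist since at least one untranslated element is present whenever $M_t$ is nonempty, and the empty case is trivial), and check the biconditional $\exists b: t'(b) = x \leftrightarrow \exists b': t'(b') = iso(x)$ for all $x$, which is exactly the pointwise statement just proved. One should also note the degenerate subtlety that if $M_t$ is empty then $init_F(M_t)$ still contains an isolated bookkeeping node while $MVM[t]$ does likewise by the definition of $proj^M$, so the isomorphism is between the two single-node graphs.

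I expect the main obstacle to be purely bookkeeping about bookkeeping: making sure the correspondence between $ucv/udv$ edges in $MVM$ and $cv/dv$ edges is faithfully reflected through the $u$-set computation, and in particular that the $u$ sets really are computed from the $ucv/udv$ edges in the same way the $p$ sets are computed from $cv/dv$ edges, so that the equality $u(v_{mv}) = p(v_{mv})$ is meaningful at the level of individual versions and not merely as abstract sets. Everything else is a direct appeal to the definition of $init_F$, the definition of $MVM[t]$, and the correctness of $proj$ already established in \cite{Barkowsky2022}; no genuinely hard argument is required, which is consistent with this theorem serving as the base case for the inductive correctness argument in the rest of Section~\ref{sec:mv_transformation_execution}.
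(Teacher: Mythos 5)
Your proposal is correct and follows essentially the same route as the paper: the paper's proof simply cites the correctness of $proj$ from \cite{Barkowsky2022} together with the definition of the bookkeeping-sensitive projection, and your argument is just a detailed unpacking of that one-liner (underlying graphs agree via $proj$, bookkeeping edges agree because $u(v_{mv}) = p(v_{mv})$ makes every present element untranslated, matching $init_F$). The extra care you take with the bookkeeping node and the empty-model case goes beyond what the paper writes down but does not change the approach.
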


\begin{proof}
Follows directly from the fact that $\forall t \in \{1, 2, ..., n\}: proj(MVM, t) = M_t$, which has been shown in \cite{Barkowsky2022}, and the definition of the bookkeeping-sensitive projection operation.
\end{proof}

By Theorem \ref{the:correctness_bookkeeping_projection}, we also get the following corollary:

\begin{corollary} \label{cor:correctness_bookkeeping_projection_sets}
Given a multi-version model $MVM$ encoding a version history with model versions $M_1, M_2, ..., M_n$ such that $\forall v_{vm} \in V^{MVM}: u(v_{vm}) = p(v_{vm})$, it holds that
$\forall t \in \{1, 2, ..., n\}: B_{mv}(MVM[t]) = B(init_F(M_t))$
up to isomorphism, where $init_F(SCT_t)$ denotes the graph with bookkeeping resulting from the preparation of $M_t$ for the regular forward transformation process, that is, the graph $M_t$ with an added bookkeeping node and bookkeeping edges to all elements in $M_t$.
\end{corollary}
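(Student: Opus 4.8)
The plan is to derive the corollary directly from Theorem~\ref{the:correctness_bookkeeping_projection} by unwinding the definitions of the two bookkeeping sets involved. Since Theorem~\ref{the:correctness_bookkeeping_projection} already establishes that $MVM[t] = init_F(M_t)$ up to isomorphism including bookkeeping, the only remaining task is to verify that the respective notions of ``translated elements'' on the two sides of this isomorphism coincide: the projection bookkeeping set $B_{mv}(MVM[t])$, defined via the condition $t \notin u(proj^{-1}(x))$, must be identified with the bookkeeping set $B(init_F(M_t))$, defined via the absence of an incident bookkeeping edge.

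First I would fix $t \in \{1, 2, ..., n\}$ and let $iso : MVM[t] \rightarrow init_F(M_t)$ be the isomorphism including bookkeeping guaranteed by Theorem~\ref{the:correctness_bookkeeping_projection}. For an arbitrary element $x \in MVM[t]$, I would chase the equivalences: by the definition of the bookkeeping-sensitive projection, $MVM[t]$ carries a bookkeeping edge targeting $x$ if and only if $M_t \in u(proj^{-1}(x))$; hence $x \in B_{mv}(MVM[t])$ iff $M_t \notin u(proj^{-1}(x))$ iff there is no bookkeeping edge targeting $x$ in $MVM[t]$. By the defining property of an isomorphism including bookkeeping (the biconditional on bookkeeping edges stated in the preliminaries), this holds iff there is no bookkeeping edge targeting $iso(x)$ in $init_F(M_t)$, which by the definition of $B$ is exactly $iso(x) \in B(init_F(M_t))$. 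Thus $iso$ restricts to a bijection between $B_{mv}(MVM[t])$ and $B(init_F(M_t))$, giving the claimed equality up to isomorphism.

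I would also note, for completeness, the slightly informal indexing in the corollary statement (the reference to $init_F(SCT_t)$ and the identification of the version index $t$ with the version node $m_t \in V^V$) and simply read it as referring to $init_F(M_t)$ via the correspondence $t \leftrightarrow m_t$ already used in the bookkeeping-sensitive projection; this is purely notational and does not affect the argument.

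I do not expect a genuine obstacle here: the corollary is essentially a restatement of Theorem~\ref{the:correctness_bookkeeping_projection} at the level of bookkeeping sets rather than whole graphs. The only point requiring a modicum of care is making sure the ``up to isomorphism including bookkeeping'' clause from the theorem is strong enough to transport the set-membership condition $t \notin u(proj^{-1}(x))$ on one side to the ``no incident bookkeeping edge'' condition on the other; this is precisely what the biconditional in the definition of isomorphism including bookkeeping provides, so the implication goes through without additional hypotheses.
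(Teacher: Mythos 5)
Your proposal is correct and follows the same route as the paper, which likewise derives the corollary directly from Theorem~\ref{the:correctness_bookkeeping_projection} together with the definitions of bookkeeping set and projection bookkeeping set; you simply spell out the element-wise equivalence that the paper leaves implicit.
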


\begin{proof}
Follows directly from Theorem \ref{the:correctness_bookkeeping_projection} and the definition of bookkeeping set and projection bookkeeping set.
\end{proof}

We now show that a multi-version rule is applicable to a multi-version model iff the corresponding regular rule is applicable to all model versions affected by the rule application.

\begin{theorem} \label{the:mv_rule_applicability}
A multi-version forward rule $\gamma_{mv} = L_{mv} \leftarrow K_{mv} \rightarrow R_{mv}$ is applicable to a multi-version model triplet $SCT_{mv}$ with bookkeeping via match $m$, if and only if for all $t \in P$, the associated original forward rule $\gamma = L \leftarrow K \rightarrow R$ is applicable to $SCT_{mv}[t]$ via match $origin(m)$, with $P = \bigcap_{v \in V^{L_{mv}}} p(m(v)) \cap \bigcap_{v \in V^{L^{T}_{mv}}} u(m(v))$.
\end{theorem}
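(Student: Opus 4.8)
The plan is to reduce applicability of a graph transformation rule to its two constituent requirements — existence of a match and satisfaction of the gluing/dangling condition — and to show that each is preserved in both directions by the $\mathit{proj}^M$ operation, relativized to the version set $P$. First I would recall that $\gamma_{mv}$ is applicable to $SCT_{mv}$ via $m$ iff (i) $m$ is a valid typed graph morphism $L_{mv}\to SCT_{mv}$ satisfying the additional constraint $P\neq\emptyset$ attached to $\mathit{trans}'(L)$ by the $\mathit{adapt}$ construction, and (ii) $m$ satisfies the dangling condition for the deletion part of $\gamma_{mv}$. Dually, $\gamma$ is applicable to $SCT_{mv}[t]$ via $\mathit{origin}(m)$ iff $\mathit{origin}(m)$ is a valid match into $SCT_{mv}[t]$, its image consists of untranslated elements exactly on $L^T$ (the bookkeeping precondition of the forward rule), and the dangling condition holds there. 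I would observe that since forward rules only create correspondence and target elements and delete only bookkeeping edges, the dangling condition is essentially about those bookkeeping edges, so most of the work is in (i) together with the bookkeeping bookkeeping.

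\textbf{Forward direction.} Assume $\gamma_{mv}$ is applicable via $m$, so in particular $P\neq\emptyset$. Fix $t\in P$. By definition of $P$ we have $t\in p(m(v))$ for every $v\in V^{L_{mv}}$, hence by the definition of $\mathit{proj}$ every node and edge of $m(L_{mv})$ survives into $SCT_{mv}[t]$; structure preservation of $\mathit{proj}$ (shown in \cite{Barkowsky2022}) then gives that $\mathit{origin}(m)$ is a well-defined typed graph morphism $L\to SCT_{mv}[t]$. For the bookkeeping precondition: for $v\in V^{L^T_{mv}}$ we have $t\in u(m(v))$, so by the definition of the bookkeeping-sensitive projection $MVM[t]$ carries a bookkeeping edge on $\mathit{origin}(m(v))$, i.e. that element is untranslated in $SCT_{mv}[t]$; conversely for $v\in V^{L_{mv}}\setminus V^{L^T_{mv}}$ the constraint $P\neq\emptyset$ with its third, subtractive term $\setminus\bigcup_{v_{mv}\in V^{\mathit{trans}'(L)}\setminus\mathit{origin}^{-1}(L^T)} u(v_{mv})$ forces $t\notin u(m(v))$, so that element is already translated. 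This is exactly what $\gamma$'s left-hand side (with its bookkeeping node and edges on $L^T$) requires. The dangling condition for $\gamma$ at $\mathit{origin}(m)$ follows because the only elements $\gamma$ deletes are bookkeeping edges, which in a graph with bookkeeping have no further incident edges, so no dangling edge can arise; hence $\gamma$ is applicable to $SCT_{mv}[t]$ via $\mathit{origin}(m)$.

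\textbf{Backward direction.} Conversely, suppose that for every $t\in P$ the rule $\gamma$ is applicable to $SCT_{mv}[t]$ via $\mathit{origin}(m)$, where $P=\bigcap_{v\in V^{L_{mv}}}p(m(v))\cap\bigcap_{v\in V^{L^T_{mv}}}u(m(v))$. I would first argue $P\neq\emptyset$: applicability for some $t$ presupposes $SCT_{mv}[t]$ contains the image of $L$, which requires $t$ to lie in all the $p$-sets, and the bookkeeping precondition of $\gamma$ requires the $L^T$-elements to be untranslated there, i.e. $t$ in the relevant $u$-sets — giving a witness; I also need $t$ to avoid $u(m(v))$ for the non-$L^T$ nodes, which is exactly the ``already translated'' half of $\gamma$'s precondition, so the subtractive term in the definition of $P$ (which the $\mathit{adapt}$ constraint uses) is also met, and thus the constraint $P\neq\emptyset$ holds. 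Next, $m$ itself is a typed graph morphism into $SCT_{mv}$ by construction of $\mathit{trans}'$ and the fact that $\mathit{origin}$ and the edge-splitting maps $s,t$ are bijections relating $SCT_{mv}$ and its projections; type compatibility and the source/target commutation are inherited from the corresponding properties of $\mathit{origin}(m)$ in any one $SCT_{mv}[t]$. Finally the dangling condition for $\gamma_{mv}$ at $m$ reduces, as before, to bookkeeping ($ucv$/$udv$) edges, which are never endpoints of further edges, so it is automatically satisfied. Hence $\gamma_{mv}$ is applicable to $SCT_{mv}$ via $m$.

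\textbf{Main obstacle.} The routine parts are the morphism/type-compatibility transfers, which are immediate from the bijectivity of $\mathit{origin}$ and the structure-preservation lemma for $\mathit{proj}$ already available from \cite{Barkowsky2022}. The delicate step is matching the \emph{bookkeeping} conditions precisely — in particular showing that the three-part set $P$ in the $\mathit{adapt}$ constraint (intersection of $p$-sets, intersection of $u$-sets on $L^T$, minus union of $u$-sets off $L^T$) corresponds exactly to ``$\gamma$ is applicable at $\mathit{origin}(m)$ in $SCT_{mv}[t]$'' for every such $t$, with neither side admitting spurious versions. Getting the quantifier alternation right here, and correctly invoking Corollary 1 to identify $B_{mv}(SCT_{mv}[t])$ with the bookkeeping set of the regular forward setup, is where care is needed; everything else is bookkeeping in the ordinary sense.
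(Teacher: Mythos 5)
Your proposal follows essentially the same route as the paper's own proof: the projected match $origin(m)$ into $SCT_{mv}[t]$ exists iff $t$ lies in all $p$-sets (by the prior multi-version matching result of \cite{Barkowsky2022}), the bookkeeping precondition on $L^T$ and its complement are matched against the $u$-sets and the subtractive term of the $adapt$ constraint, and the dangling condition is dismissed as trivial because only bookkeeping edges are deleted. The only wrinkle is your attempt in the backward direction to derive $P \neq \emptyset$ from a witness $t$ that the (possibly vacuous) hypothesis need not supply — a degenerate case that the paper's own per-version ``iff'' chain glosses over in exactly the same way.
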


\begin{proof}
For a version $t$, as has already been shown in \cite{Barkowsky2022}, the match $m : L_{mv} \rightarrow SCT_{mv}$ has a corresponding match $origin(m) : L \rightarrow SCT_{mv}[t]$ if and only if $t \in \bigcap_{v \in V^{L_{mv}}} p(m(v))$. Furthermore, due to the definition of $P$ and the construction of $\gamma_{mv}$, all elements in $m(origin(m)(L^{T}))$ have an adjacent bookkeeping edge in $SCT_{mv}[t]$ iff $t \in \bigcap_{v \in V^{L^{T}_{mv}}} u(m(v))$. Similarly, all elements in $m(origin(m)(L \setminus L^{T}))$ have no adjacent bookkeeping edge in $SCT_{mv}[t]$ iff $t \notin \bigcup_{v \in V^{L_{mv} \setminus L^{T}_{mv}}} u(m(v))$. Since $\gamma$ and $\gamma_{mv}$ delete no vertices, the dangling condition is trivially satisfied for $r$ and the match $trans(m)$. $\gamma_{mv}$ is hence applicable to $SCT_{mv}$ via $m$, with $t \in P$, iff $r$ is applicable to $SCT_{mv}[t]$ via $origin(m)$.
\end{proof}

We can now show the equivalence of a single multi-version rule application to a multi-version model to the application of the corresponding regular rule to all affected model versions.

\begin{theorem} \label{the:mv_rule_application}
For an application $SCT_{mv} \rightarrow^{\gamma_{mv}}_{m} SCT_{mv}'$ of a multi-version forward rule $\gamma_{mv} = L_{mv} \leftarrow K_{mv} \rightarrow R_{mv}$ with original forward rule $\gamma = L \leftarrow K \rightarrow R$ to a multi-version model triplet $SCT_{mv}$ with bookkeeping and version graph $V$ via match $m$, it holds up to isomorphism including bookkeeping that $\forall t \in P: SCT_{mv}'[t] = SCT'$, with the corresponding application $SCT_{mv}[t] \rightarrow^{r}_{origin(m)} SCT'$, and $\forall t \in V^V \setminus P: SCT_{mv}'[t] = SCT_{mv}[t]$, where $P = \bigcap_{v \in V^{L_{mv}}} p(m(v)) \cap \bigcap_{v \in V^{L^{T}_{mv}}} u(m(v))$.
\end{theorem}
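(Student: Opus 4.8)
The plan is to reduce the multi-version rule application $SCT_{mv} \rightarrow^{\gamma_{mv}}_{m} SCT_{mv}'$ to its single-version counterpart by treating separately its effect on ordinary (non-bookkeeping) graph structure and its effect on the $ucv$/$udv$ bookkeeping edges, in both cases distinguishing the versions $t \in P$ from $t \notin P$. First I would invoke Theorem~\ref{the:mv_rule_applicability} to establish that $\gamma$ is applicable to $SCT_{mv}[t]$ via $origin(m)$ exactly for $t \in P$, so that both $SCT'$ and the claimed equalities are well-defined. I would then note that $\gamma$ and $\gamma_{mv}$ create elements and delete only bookkeeping-related edges — in particular no vertices — and decompose the application into: (a) the creation of the correspondence and target nodes newly produced by $\gamma_{mv}$ together with their incident structural edges and the $cv$/$dv$ edges realizing the assignment $p(v^C_{mv}) \coloneqq P$; and (b) the update $u(v_{mv}) \coloneqq u(v_{mv}) \setminus P$ for each $v_{mv} \in origin^{-1}(r(l^{-1}(L^T)))$, with every remaining structural edge and every remaining $ucv$/$udv$ edge left untouched.

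For the structural part (a), I would unfold the definitions of $trans'$, $origin$, and the projection operation, using the matching characterization of \cite{Barkowsky2022} — a match $m : L_{mv} \rightarrow SCT_{mv}$ induces a match $origin(m) : L \rightarrow SCT_{mv}[t]$ iff $t \in \bigcap_{v \in V^{L_{mv}}} p(m(v))$ — together with the semantics of the assignment $p(v^C_{mv}) \coloneqq P$. A newly created node $v^C_{mv}$ is visible in $SCT_{mv}'[t]$ iff $t \in p(v^C_{mv}) = P$; for $t \in P$, hence in particular $t \in \bigcap_{v \in V^{L_{mv}}} p(m(v))$, all matched context elements of $\gamma_{mv}$ are present in $SCT_{mv}[t]$, and projecting $SCT_{mv}'$ to $t$ reproduces exactly the correspondence-and-target subgraph created by the corresponding application $SCT_{mv}[t] \rightarrow^{r}_{origin(m)} SCT'$, edges of the model version being recovered from the created multi-version nodes and their structural edges; for $t \notin P$ no created element is visible in $SCT_{mv}'[t]$ and no existing element is changed, so $SCT_{mv}'[t]$ and $SCT_{mv}[t]$ agree on all non-bookkeeping structure.

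For the bookkeeping part (b), I would apply the definition of the bookkeeping-sensitive projection: $origin(v_{mv})$ carries a bookkeeping edge in $SCT_{mv}'[t]$ iff $t \in u(v_{mv})$ after the update, while the newly created correspondence and target nodes acquire no $ucv$/$udv$ edges and are therefore unmarked — consistently with the fact that $\gamma$ creates the corresponding single-version elements already translated. For $t \notin P$, removing $P$ from any set $u(v_{mv})$ does not affect whether $t$ lies in it, so the bookkeeping status of every element of $SCT_{mv}'[t]$ coincides with that in $SCT_{mv}[t]$; combined with the structural analysis this gives $SCT_{mv}'[t] = SCT_{mv}[t]$ up to isomorphism including bookkeeping. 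For $t \in P$, each $v_{mv} \in origin^{-1}(r(l^{-1}(L^T)))$ loses $t$ from $u(v_{mv})$ and hence loses its adjacent bookkeeping edge in $SCT_{mv}'[t]$, and this is precisely the set of elements whose adjacent bookkeeping edges $\gamma$ deletes (those matched by $L^T$), every other bookkeeping edge being preserved by $\gamma$; hence the bookkeeping of $SCT_{mv}'[t]$ matches that of $SCT'$. Composing the structural isomorphism with the identity on bookkeeping edges — well-defined by the preceding matching of marked elements — yields the required isomorphism including bookkeeping.

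I expect the main obstacle to lie in the bookkeeping accounting of the last step: one must verify, against the construction of $adapt$ and Theorem~\ref{the:mv_rule_applicability}, that the single-version forward rule's bookkeeping-edge deletions correspond under $origin$ exactly to the set $origin^{-1}(r(l^{-1}(L^T)))$ on which $\gamma_{mv}$ performs $u \coloneqq u \setminus P$, and that the assignment $p(v^C_{mv}) \coloneqq P$ is consistent with the versions in which $\gamma$ genuinely creates the corresponding elements. A secondary subtlety is that edges of the model versions are encoded as nodes of $MVM$, so ``creating an edge'' in the single-version rule corresponds to ``creating a node together with its structural edges'' in the multi-version rule; the bijection $origin$ and the $trans'$ construction take care of this, but the isomorphism including bookkeeping has to be assembled so as to respect that correspondence.
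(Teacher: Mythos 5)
Your proposal is correct and follows essentially the same route as the paper's own proof: a case split on $t \in P$ versus $t \notin P$, arguing that created elements are mv-present only for versions in $P$ (so projections for $t \notin P$ are unchanged in both structure and bookkeeping), and for $t \in P$ assembling an isomorphism from the identity on the preserved part plus the correspondence of newly created elements, then checking that the $u$-update matches the single-version bookkeeping-edge deletions. The only cosmetic difference is that the paper makes the isomorphism on the created part explicit via the comatches (as $n' \circ trans \circ n^{-1} \circ origin$), where you argue it slightly more informally.
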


\begin{proof}
Disregarding bookkeeping edges, all forward rules and thus also the adapted forward rules are productions. Due to the construction of the adapted forward rules, all elements created by the rule's application are only mv-present in $SCT_{mv}'$ for the versions in $P$. Therefore, for all remaining versions, $SCT_{mv}[t]$ contains exactly the same elements as $SCT_{mv}'[t]$. An isomorphism $iso: SCT_{mv}[t] \rightarrow SCT_{mv}'[t]$ is hence trivially given by the identity in this case. Since the application of $\gamma_{mv}$ only changes the projection bookkeeping sets for versions in $P$, $B_{mv}(SCT_{mv}'[t]) = B(SCT_{mv}[t])$ with isomorphism $iso$.

It thus holds up to isomorphism that $\forall t \in V^V \setminus P: SCT_{mv}'[t] = SCT_{mv}[t] \wedge B_{mv}(SCT_{mv}'[t]) = B(SCT_{mv}[t])$.

The application of $\gamma_{mv}$ to $SCT_{mv}$ yields a comatch $n : R_{mv} \rightarrow SCT_{mv}'$ and the associated application of $\gamma$ to $SCT_{mv}[t]$ similarly yields a comatch $n' : R \rightarrow SCT'$ for any $t \in P$.

An isomorphism $iso: SCT_{mv}'[t] \rightarrow SCT'$ can then be constructed as follows: Since $\gamma_{mv}$ is a production, $SCT_{mv}$ is a subgraph of $SCT_{mv}'$ and hence $SCT_{mv}[t]$ is also a subgraph of $SCT_{mv}'[t]$ except for bookkeeping. Since $\gamma$ is a production, $SCT_{mv}[t]$ is also a subgraph of $SCT'$. Isomorphic mappings for $SCT_{mv}[t]$ between $SCT_{mv}'[t]$ and $SCT'$ are thus simply given by the identity. This leaves only the elements in $n(R_{mv} \setminus L_{mv})$ and the elements in $n'(R \setminus L)$ unmapped. Due to the construction of $\gamma_{mv}$ being unique up to isomorphism, $n$ and $n'$ being monomorphisms, and $trans$ and $origin$ being bijections, the remaining isomorphic mappings are given by $n' \circ trans \circ n^{-1} \circ origin$. Note that for elements in $n(L_{mv})$, the definition of $iso$ via identity and $n' \circ trans \circ n^{-1} \circ origin$ is redundant but compatible.

Due to the definition of bookkeeping-sensitive projection, bookkeeping set, and projection bookkeeping set, it holds that $B(SCT_{mv}[t]) = B_{mv}(SCT_{mv}[t])$ and thus $B_{mv}(SCT_{mv}[t]) = B(SCT_{mv}[t]))$. Compared to $B_{mv}(SCT_{mv}[t])$, the application of $\gamma_{mv}$ only changes the projection bookkeeping set $B_{mv}(SCT_{mv}'[t])$ by adding the elements in $trans(m(L^{T}_{mv}))$. The modification to $B_{mv}(SCT_{mv}'[t])$ hence corresponds to the modification of the bookkeeping set $B(SCT')$ by the application of $\gamma$ via $trans(m)$ for the isomorphism $iso$ due to the construction of $\gamma_{mv}$.

It thus holds that $\forall t \in P: SCT_{mv}'[t] = SCT' \wedge B_{mv}(SCT_{mv}'[t]) = B(SCT')$.
\end{proof}

Based on Theorem \ref{the:mv_rule_application} for individual rule applications, we get the following corollary for sequences of rule applications:

\begin{corollary} \label{cor:mv_rule_application_sequence}
For a TGG with associated set of forward rules $\Gamma$ and multi-version forward rules $\Gamma_{mv}$ and a multi-version model triplet $SCT_{mv}$ with bookkeeping and version graph $V$, there is a sequence of rule applications $SCT_{mv} \rightarrow^{\Gamma_{mv}} SCT_{mv}'$ if and only if for all $t \in V^V$, there is a sequence of rule applications $SCT_{mv}[t] \rightarrow^{\Gamma} SCT'$ with $SCT_{mv}'[t] = SCT'$ up to isomorphism including bookkeeping.
\end{corollary}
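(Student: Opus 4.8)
The plan is to prove both directions by induction, using Theorem~\ref{the:mv_rule_application} as the single-step bridge between a multi-version rule application and the induced family of per-version applications, Theorem~\ref{the:mv_rule_applicability} to transfer applicability in both directions, and the definitions of the (projection) bookkeeping sets to keep bookkeeping synchronised. For the ``only if'' direction I would induct on the length $k$ of the multi-version sequence $SCT_{mv} \rightarrow^{\Gamma_{mv}} SCT_{mv}'$. The case $k = 0$ is immediate, since then $SCT_{mv}' = SCT_{mv}$ and the empty regular sequence witnesses the claim for every $t$. For the step, split the sequence as $SCT_{mv} \rightarrow^{\Gamma_{mv}} SCT_{mv}'' \rightarrow^{\gamma_{mv}}_{m} SCT_{mv}'$ with a prefix of length $k$, let $P$ be the version set of the last application, and apply the induction hypothesis to get, for each $t \in V^V$, a regular sequence $SCT_{mv}[t] \rightarrow^{\Gamma} SCT''$ with $SCT_{mv}''[t] = SCT''$ up to isomorphism including bookkeeping. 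For $t \notin P$, Theorem~\ref{the:mv_rule_application} gives $SCT_{mv}'[t] = SCT_{mv}''[t]$, so the same sequence still works; for $t \in P$, Theorem~\ref{the:mv_rule_applicability} makes $\gamma$ applicable to $SCT_{mv}''[t]$ via $origin(m)$, hence also to the isomorphic $SCT''$, and appending that application and composing the resulting isomorphism with the one from Theorem~\ref{the:mv_rule_application} yields a regular sequence reaching $SCT'$ with $SCT' = SCT_{mv}'[t]$ up to isomorphism including bookkeeping.

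For the ``if'' direction I would induct on $\sum_{t \in V^V} k_t$, where $k_t$ is the length of the given regular sequence for version $t$. If all $k_t = 0$, then every bookkeeping-sensitive projection of $SCT_{mv}'$ agrees with that of $SCT_{mv}$, which forces $SCT_{mv}' = SCT_{mv}$ up to isomorphism including bookkeeping (an mv-model being determined by its bookkeeping-sensitive projections), so the empty multi-version sequence suffices. Otherwise pick a version $t_0$ with $k_{t_0} > 0$; its sequence starts with some $SCT_{mv}[t_0] \rightarrow^{\gamma}_{origin(m)} \hat S$. The match $origin(m)$ lifts, as in \cite{Barkowsky2022}, to a match $m : L_{mv} \rightarrow SCT_{mv}$, and since $\gamma$ is applicable in version $t_0$ we have $t_0 \in P$ for the version set $P$ of $m$, so $P \neq \emptyset$ and, by Theorem~\ref{the:mv_rule_applicability}, $\gamma_{mv}$ is applicable to $SCT_{mv}$ via $m$. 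Applying it gives $SCT_{mv} \rightarrow^{\gamma_{mv}}_{m} \widehat{SCT_{mv}}$, and by Theorem~\ref{the:mv_rule_application} each projection $\widehat{SCT_{mv}}[t]$ equals $SCT_{mv}[t]$ for $t \notin P$ and the result of applying $\gamma$ at $origin(m)$ to $SCT_{mv}[t]$ for $t \in P$. It then remains to produce, for every $t$, a regular sequence from $\widehat{SCT_{mv}}[t]$ to $SCT_{mv}'[t]$ with strictly smaller total length, so that the induction hypothesis applies to $\widehat{SCT_{mv}}$ and the lifted step prepends to the resulting multi-version sequence; for $t \notin P$ and for $t = t_0$ this is immediate.

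The main obstacle is the remaining case $t \in P \setminus \{t_0\}$: the given regular sequence for such a $t$ need not begin with the application of $\gamma$ at $origin(m)$, so it must be rearranged to bring that application --- or an equivalent one translating the same source elements --- to the front. This is where the assumed determinism of the TGG (absence of critical pairs among its forward rules) is essential: it guarantees that the lifted application is parallel independent of, and therefore commutes past, any other forward-rule application on the same projection, and that the source elements it translates are handled by a unique rule and match in every forward-rule sequence, so the tails of the per-version sequences can be adjusted without altering their results up to isomorphism including bookkeeping. Making this commutation argument precise, and checking the bookkeeping side conditions throughout, is the delicate part; the rest is routine bookkeeping-chasing on top of Theorems~\ref{the:mv_rule_applicability} and~\ref{the:mv_rule_application}.
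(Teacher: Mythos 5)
Your ``only if'' direction is essentially the paper's own proof: induction on the length of the multi-version sequence, with Theorem~\ref{the:mv_rule_application} (plus Theorem~\ref{the:mv_rule_applicability} for applicability) supplying the step, splitting the versions into $P$ and $V^V \setminus P$. That half is correct. The paper, however, does \emph{not} prove the converse the way you attempt to: its single induction reads Theorem~\ref{the:mv_rule_application} as a biconditional and extends the multi-version sequence and the per-version sequences in lock-step, so it only ever considers per-version sequences that arise as (projections of) a common multi-version construction; the alignment problem you isolate is never confronted there, and it is only harmless downstream because Corollary~\ref{cor:mv_max_rule_application_sequence} and Theorem~\ref{the:mv_transformation_correctness} use \emph{maximal} sequences of a \emph{deterministic} TGG, where each version's result is unique. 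Note also that determinism is not among the hypotheses of this corollary (it is a standing assumption of the paper), and the paper's own proof of the corollary does not invoke it.

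Your ``if'' direction is therefore a genuinely different and more ambitious route, and the case you defer, $t \in P \setminus \{t_0\}$, is not merely delicate --- as set up, the induction can fail. The version set $P$ is determined by the match $m$, not by your choice of $t_0$, so the lifted application $SCT_{mv} \rightarrow^{\gamma_{mv}}_{m} \widehat{SCT}_{mv}$ forcibly transforms \emph{every} version in $P$. If some $t_1 \in P$ has a prescribed regular sequence that never translates the source elements matched by $origin(m)$ (for instance $k_{t_1} = 0$ while $t_1$ has the same projection and untranslated sets as $t_0$, which forces $t_1 \in P$), then the prescribed target $SCT_{mv}'[t_1] = SCT_{mv}[t_1]$ does not contain the correspondence and target elements your lifted step creates for $t_1$, and no commutation argument can repair this: absence of critical pairs lets you permute applications \emph{within} a version's sequence, but it cannot justify inserting an application into a version that was not supposed to perform one at all. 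So the induction on $\sum_t k_t$ cannot be completed at the generality you aim for; you would have to restrict to maximal per-version sequences (the setting actually needed for Corollary~\ref{cor:mv_max_rule_application_sequence}) or retreat to the lock-step formulation the paper uses, under which your first paragraph already suffices. A smaller issue: your base case asserts that a multi-version model is determined by its bookkeeping-sensitive projections, but elements whose presence set is empty are invisible in every projection, so equality of all projections does not by itself force $SCT_{mv}' = SCT_{mv}$ up to isomorphism including bookkeeping.
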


\begin{proof}
We prove the corollary by induction over the length of the multi-version rule application sequence.

For the base case of application sequences of length 0, the identity morphism and empty application sequences trivially satisfy the corollary.

If there is a sequence of rule applications $SCT_{mv} \rightarrow^{\Gamma_{mv}} SCT_{mv}'$ if and only if for all $t \in V^V$, there is a sequence of rule applications $SCT_{mv}[t] \rightarrow^{\Gamma} SCT'$ with $SCT_{mv}'[t] = SCT' \wedge B_{mv}(SCT_{mv}'[t]) = B(SCT')$, by Theorem \ref{the:mv_rule_application} we have an extended multi-version sequence $SCT_{mv} \rightarrow^{\Gamma_{mv}} SCT_{mv}' \rightarrow^{\gamma_{mv}}_{m} SCT_{mv}''$ and all $t \in V^V$ if and only if for all $t \in V^V$, there is a sequence of regular rule applications $SCT_{mv}[t] \rightarrow^{\Gamma} SCT''$ with $SCT_{mv}''[t] = SCT'' \wedge B_{mv}(SCT_{mv}''[t]) = B(SCT'')$.

For all $t \in V^V \setminus P$, where $P = \bigcap_{v \in V^{L_{mv}}} p(m(v)) \cap \bigcap_{v \in V^{L^{T}_{mv}}} u(m(v))$, the corresponding regular rule application sequence $SCT_{mv}[t] \rightarrow^{\Gamma} SCT'$ and isomorphism $iso : SCT_{mv}'[t] \rightarrow SCT'$ are also valid for $SCT_{mv}''[t]$ and satisfy the condition on bookkeeping sets, since $SCT' = SCT_{mv}'[t] = SCT_{mv}''[t]$ (up to isomorphism).

In accordance with Theorem \ref{the:mv_rule_application}, there is an extended sequence $SCT_{mv} \rightarrow^{\Gamma_{mv}} SCT_{mv}' \rightarrow^{\gamma_{mv}}_{m} SCT_{mv}''$ if and only if for all $t \in P$, the regular rule application sequence $SCT_{mv}[t] \rightarrow^{\Gamma} SCT_{mv}'[t]$ can be extended by a rule application $SCT_{mv}'[t] \rightarrow^{\gamma}_{trans(m)} SCT_{mv}''[t]$ that satisfies the condition on bookkeeping sets.

Thus, there is a sequence of rule applications $SCT_{mv} \rightarrow^{\Gamma_{mv}} SCT_{mv}' \rightarrow^{\gamma_{mv}}_{m} SCT_{mv}''$ if and only if for all $t \in V^V$, there is a sequence of rule applications $SCT_{mv}[t] \rightarrow^{\Gamma} SCT''$ with $SCT_{mv}''[t] = SCT'' \wedge B_{mv}(SCT_{mv}''[t]) = B(SCT'')$.

With the proof for the base case and the induction step, we have proven the correctness of the corollary.
\end{proof}

Intuitively, the multi-version forward rules perform a simultaneous transformation of multiple model versions encoded in $SCT_{mv}$. The application of a multi-version rule $L_{mv} \leftarrow K_{mv} \rightarrow R_{mv}$ corresponds to the application of the original rule to all model versions in $P = \bigcap_{v \in V^{L_{mv}}} p(m(v)) \cap \bigcap_{v \in V^{L^{T}_{mv}}} u(m(v))$ and leaves other model versions unchanged. Thus, a multi-version rule application effectively extends the original rule application sequences for versions in $P$ by the associated original rule application, whereas it represents the ``skipping'' of a step for versions not in $P$.

\begin{corollary} \label{cor:mv_max_rule_application_sequence}
For a TGG with associated set of forward rules $\Gamma$ and multi-version forward rules $\Gamma_{mv}$ and a multi-version model triplet $SCT_{mv}$ with bookkeeping and version graph $V$, there is a maximal sequence of rule applications $SCT_{mv} \rightarrow^{\Gamma_{mv}} SCT_{mv}'$ if and only if for all $t \in V^V$, there is a maximal sequence of regular rule applications $SCT_{mv}[t] \rightarrow^{\Gamma} SCT'$ such that $SCT_{mv}'[t] = SCT'$ up to isomorphism including bookkeeping.
\end{corollary}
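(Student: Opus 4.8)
The plan is to reduce the statement about \emph{maximal} sequences to Corollary~\ref{cor:mv_rule_application_sequence}, which already gives the correspondence for arbitrary sequences, plus a characterization of maximality purely in terms of the endpoint. A multi-version sequence $SCT_{mv} \rightarrow^{\Gamma_{mv}} SCT_{mv}'$ is maximal exactly when no $\gamma_{mv} \in \Gamma_{mv}$ is applicable to $SCT_{mv}'$, and a regular sequence $SCT_{mv}[t] \rightarrow^{\Gamma} SCT'$ is maximal exactly when no $\gamma \in \Gamma$ is applicable to $SCT'$. Hence the corollary follows once I establish the following endpoint lemma: for any multi-version model triplet $X$ with bookkeeping, some multi-version forward rule is applicable to $X$ if and only if, for some version $t \in V^V$, the associated original forward rule is applicable to $X[t]$. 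Combined with Corollary~\ref{cor:mv_rule_application_sequence}, this immediately yields both directions: from a maximal multi-version sequence we obtain per-version regular sequences to $SCT_{mv}'[t]$ which the lemma certifies as maximal (no $\gamma$ is applicable to $SCT_{mv}'[t] = SCT'$), and conversely.

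For the endpoint lemma I would argue as in the proof of Theorem~\ref{the:mv_rule_applicability}, using the match correspondence of \cite{Barkowsky2022}. If $\gamma_{mv}$ is applicable to $X$ via $m$, then by construction of the adapted rules the accompanying version set $P$ is non-empty; choosing any $t \in P$ and invoking Theorem~\ref{the:mv_rule_applicability} gives applicability of $\gamma$ to $X[t]$ via $origin(m)$. Conversely, a match $m' : L \rightarrow X[t_0]$ of an original forward rule lifts, via the bijection $origin$ and the definition of the bookkeeping-sensitive projection, to a multi-version match $m : L_{mv} \rightarrow X$; since $m'$ is a valid forward match, the elements in the image of $L^T$ carry bookkeeping edges at $t_0$ while the remaining matched elements do not, which by the definitions of the $p$- and $u$-sets places $t_0$ in the version set $P$ of $m$, so $P \neq \emptyset$, and moreover for every $t \in P$ the projected match is again a valid forward match on $X[t]$ (its image is present because $t$ lies in all the relevant $p$-sets, and the bookkeeping conditions hold by the definition of $P$). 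Theorem~\ref{the:mv_rule_applicability} then yields applicability of $\gamma_{mv}$ to $X$.

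It then remains to assemble the two directions. For ``$\Rightarrow$'', given a maximal sequence $SCT_{mv} \rightarrow^{\Gamma_{mv}} SCT_{mv}'$, Corollary~\ref{cor:mv_rule_application_sequence} provides for each $t$ a sequence $SCT_{mv}[t] \rightarrow^{\Gamma} SCT'$ with $SCT_{mv}'[t] = SCT'$ up to isomorphism including bookkeeping, and maximality together with the endpoint lemma shows that no $\gamma$ is applicable to this $SCT'$, so the sequence is maximal. For ``$\Leftarrow$'', I would first observe that, as in the single-version case recalled in Section~\ref{sec:preliminaries}, the forward transformation terminates: by Theorem~\ref{the:mv_rule_application} every multi-version rule application extends the induced regular sequence at each $t \in P$ with $P \neq \emptyset$, so the length of any multi-version sequence is bounded by the sum over the versions of the finite lengths of their induced regular sequences; hence a maximal multi-version sequence exists, and applying the ``$\Rightarrow$'' direction to it identifies its per-version projections with maximal regular sequences.

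The step I expect to be the main obstacle is matching up the endpoints in the ``$\Leftarrow$'' direction, that is, guaranteeing that the maximal multi-version sequence produced projects onto maximal regular sequences with exactly the prescribed results $SCT'$; this is where the determinism of the TGG, i.e. uniqueness of the result of a maximal forward transformation as recalled in Section~\ref{sec:preliminaries}, is used, since without confluence a version could in principle admit several distinct maximal results. A secondary technical nuisance is the careful tracking of the $ucv$ and $udv$ edges when lifting a single-version match to a multi-version one, so that the lifted match's version set $P$ is genuinely non-empty and applicability in the ``for all $t \in P$'' sense of Theorem~\ref{the:mv_rule_applicability} is met rather than merely met vacuously.
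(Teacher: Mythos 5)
Your proposal is correct and follows essentially the same route as the paper: Corollary~\ref{cor:mv_rule_application_sequence} supplies the sequence-and-endpoint correspondence, and the applicability equivalence of Theorem~\ref{the:mv_rule_applicability} (your ``endpoint lemma'', including the lifting of single-version matches via $origin$) turns maximality of the multi-version sequence into maximality of every projected regular sequence. The additional termination and determinism considerations you raise for the ``$\Leftarrow$'' direction are not needed at this point in the paper --- determinism only enters later, in Theorem~\ref{the:mv_transformation_correctness} --- but they do not harm the argument.
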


\begin{proof}
The existence of a sequence of original rule applications for a sequence of multi-version rule applications and all versions $t \in V^V$ and vice-versa is given by Corollary \ref{cor:mv_rule_application_sequence}. From Theorem \ref{the:mv_rule_applicability}, it follows directly that the multi-version sequence is maximal if and only if the regular sequences are maximal for all $t \in V^V$.
\end{proof}

For a deterministic TGG, a correct translation of source graph $S$ is given by any maximal rule application sequence of forward rules that deletes all bookkeeping edges in the source model. Note that because of the determinism criterion, either every maximal rule application sequences or none of them satisfies the bookkeeping criterion.

Thus, for a deterministic TGG and by Theorem \ref{the:correctness_bookkeeping_projection} and Corollary \ref{cor:mv_max_rule_application_sequence}, the results of jointly transforming the model versions using the TGG, that is, the result of repeated application of adapted transformation rules to a multi-version model prepared for multi-version translation until a fixpoint is reached, are equivalent to the results of repeated application of the original rules to the individual model versions prepared for translation.

We thereby have the correctness of the forward transformation using multi-version forward rules $trans^{F}_{mv}$, which applies multi-version forward rules to a multi-version model with bookkeeping until a fixpoint is reached.

\begin{theorem} \label{the:mv_transformation_correctness}
For a correct version history $\Delta^{M_{\{1,...,n\}}}$ and a triple graph grammar with set of forward rules $\Gamma$, it holds up to isomorphism that
$\forall t \in \{1, ..., n\} : trans^{F}_{mv}(init_F(comb(\Delta^{M_{\{1,...,n\}}})), adapt(\Gamma))[t] = trans^F(M_t, \Gamma)$ if $trans^F(M_t, \Gamma)$ contains no bookkeeping edges.
\end{theorem}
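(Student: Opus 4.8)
The plan is to chain together the results already proved in this section, so that the heavy lifting is done by Theorem~\ref{the:correctness_bookkeeping_projection} and Corollary~\ref{cor:mv_max_rule_application_sequence}. First I would check the hypothesis of Theorem~\ref{the:correctness_bookkeeping_projection} for the prepared multi-version model. By the preparation step described at the start of this section, $init_F$ applied to $comb(\Delta^{M_{\{1,...,n\}}})$ first deletes all $ucv$ and $udv$ edges and then creates a $ucv$ edge for every $cv$ edge and a $udv$ edge for every $dv$ edge; hence $\forall v_{mv} \in V^{MVM}: u(v_{mv}) = p(v_{mv})$ holds afterwards. Theorem~\ref{the:correctness_bookkeeping_projection} then yields $init_F(comb(\Delta^{M_{\{1,...,n\}}}))[t] = init_F(M_t)$ for every $t \in \{1,...,n\}$, up to isomorphism including bookkeeping.

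Next I would unfold the definition of $trans^F_{mv}$: a run of $trans^F_{mv}$ on $init_F(comb(\Delta^{M_{\{1,...,n\}}}))$ with rule set $adapt(\Gamma)$ is by definition a \emph{maximal} rule application sequence $init_F(comb(\Delta^{M_{\{1,...,n\}}})) \rightarrow^{adapt(\Gamma)} SCT'_{mv}$, and its value is $SCT'_{mv}$. Applying Corollary~\ref{cor:mv_max_rule_application_sequence} to this maximal multi-version sequence, for every version $t$ there is a maximal sequence of regular forward rules $init_F(comb(\Delta^{M_{\{1,...,n\}}}))[t] \rightarrow^{\Gamma} SCT'$ with $SCT'_{mv}[t] = SCT'$ up to isomorphism including bookkeeping. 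Combining this with the first step, for each $t$ we obtain a maximal regular forward rule application sequence starting from $init_F(M_t)$ whose result is isomorphic, including bookkeeping, to $SCT'_{mv}[t] = trans^F_{mv}(init_F(comb(\Delta^{M_{\{1,...,n\}}})), adapt(\Gamma))[t]$.

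It then remains to identify that result with $trans^F(M_t, \Gamma)$. Here I would invoke determinism of the TGG: $trans^F(M_t, \Gamma)$ is itself, by definition, the outcome of a maximal forward rule application sequence starting from $init_F(M_t)$, and for a deterministic TGG all such maximal sequences yield the same result up to isomorphism, with either all of them or none of them deleting every bookkeeping edge. Since by hypothesis $trans^F(M_t, \Gamma)$ contains no bookkeeping edges, the maximal sequence produced in the previous step also ends in a bookkeeping-free graph, so ``isomorphic including bookkeeping'' collapses to plain ``isomorphic''. Chaining the three steps gives $trans^F_{mv}(init_F(comb(\Delta^{M_{\{1,...,n\}}})), adapt(\Gamma))[t] = trans^F(M_t, \Gamma)$ up to isomorphism, as claimed.

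The one point I expect to require care — rather than routine tracking of which ``up to isomorphism'' qualifiers survive each step — is that $adapt(\Gamma)$ need not itself be free of critical pairs, so a priori two runs of $trans^F_{mv}$ could produce non-isomorphic multi-version models. What rescues the argument is that the statement is formulated per version $t$: for each fixed $t$, the projection of whatever multi-version result is produced is pinned down, via Corollary~\ref{cor:mv_max_rule_application_sequence}, by determinism of the \emph{original} rule set $\Gamma$. I would therefore phrase the proof for an arbitrary but fixed run of $trans^F_{mv}$ and a fixed $t$, so that the possible non-confluence of the multi-version rules never has to be resolved.
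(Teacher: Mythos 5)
Your proposal is correct and follows essentially the same route as the paper: the paper's proof is simply ``follows from Theorem~\ref{the:correctness_bookkeeping_projection} and Corollary~\ref{cor:mv_max_rule_application_sequence}'', with the determinism/bookkeeping argument you spell out in your third paragraph appearing verbatim in the text immediately preceding the theorem. Your closing remark about possible non-confluence of $adapt(\Gamma)$ being irrelevant because each projection is pinned down by determinism of $\Gamma$ is a careful elaboration the paper leaves implicit, but it does not change the argument.
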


\begin{proof}
Follows from Theorem \ref{the:correctness_bookkeeping_projection} and Corollary \ref{cor:mv_max_rule_application_sequence}.
\end{proof}

\section{Incremental Execution of Multi-version Synchronizations (Detailed)} \label{sec:mv_incremental_execution}

As TGGs naturally offer capabilities for incremental synchronization of single-version models \cite{giese2009model}, the related concepts can be transferred to multi-version case to enable direct incremental model synchronization when developing with multi-version models. We therefore consider a standard scenario where new versions of a model can be created, changed and merged by developers and the associated modifications to the model's version history should correctly be propagated to a related target model of a TGG-based transformation. In the following, we discuss how TGGs can be used to react to the different kinds of modifications of multi-version models required in such a scenario.

Formally, the creation of a new version for single-version models corresponds to the introduction of a new model modification $M_i \leftarrow K \rightarrow M_{n+1}$ into a version history $\Delta^{M_{\{1,...,n\}}}$ such that $M_i$ and $M_{n+1}$ are isomorphic, requiring copying of the base version to retain the version history. The realization in the context of multi-version models via a procedure $apply^v_{mv}$ only requires adding a new $version$ node for $M_{n+1}$ along with an incoming $suc$ edge from $M_i$.

The following theorem states that applying this procedure to a previously transformed multi-version model triplet already yields a triplet where all encoded versions of the source model are correctly transformed. Thus, no further synchronization effort is required.

\begin{theorem} \label{the:sync_version_creation}
For a correct version history $\Delta^{M_{\{1,...,n\}}}$, an extended version history $\Delta^{M_{\{1,...,n + 1\}}} = \Delta^{M_{\{1,...,n\}}} \cup \{M_i \leftarrow K \rightarrow M_{n+1}\}$ with $i \in \{1,...,n\}$ and $(V^M_i, E^M_i, s^M_i, t^M_i) = (V^M_{n + 1}, E^M_{n + 1}, s^M_{n + 1}, t^M_{n + 1})$, and a TGG with set of forward rules $\Gamma$, it holds up to isomorphism including bookkeeping that
$\forall t \in \{1, ..., n + 1\} : apply^v_{mv}(SCT_{mv}, M_i \leftarrow K \rightarrow M_{n+1})[t] = SCT_t'$,
with $SCT_{mv} =  trans^{F}_{mv}(init_F(comb(\Delta^{M_{\{1,...,n\}}})), adapt(\Gamma))$ and $M_t \rightarrow^\Gamma SCT_t'$ a maximal rule application sequence.
\end{theorem}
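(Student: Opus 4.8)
The plan is to reduce the statement to the already-established correctness of the batch transformation (Theorem~\ref{the:mv_transformation_correctness}) together with a purely structural observation about what $apply^v_{mv}$ does. The key idea is that adding the new version $M_{n+1}$ as a copy of $M_i$ does not introduce any genuinely new transformation work: every element of $M_{n+1}$ has an $origin$-counterpart that is already present in $M_i$, was already translated during the construction of $SCT_{mv}$, and the corresponding correspondence and target elements are already present. So I would first analyze $comb$ on the extended history and relate it to $comb$ on the original history, and then show that the projection to version $n+1$ after $apply^v_{mv}$ coincides with the projection to version $i$, which by Theorem~\ref{the:mv_transformation_correctness} is $trans^F(M_i,\Gamma)$, which is exactly $SCT_{n+1}'$ since $M_i$ and $M_{n+1}$ are isomorphic.

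Concretely, the first step is to observe that $apply^v_{mv}(SCT_{mv}, M_i \leftarrow K \rightarrow M_{n+1})$ differs from $SCT_{mv}$ only by the new $version$ node $m_{n+1}$, the new $suc$ edge $m_i \to m_{n+1}$, and — by the definition of the $p$, $u$, $cv$, $dv$, $ucv$, $udv$ machinery — the fact that for every multi-version node $v_{mv}$ with $M_i \in p(v_{mv})$ we now also have $M_{n+1} \in p(v_{mv})$, and likewise for $u$. This is precisely what makes the projection operation behave correctly: for $t = n+1$, the bookkeeping-sensitive projection $apply^v_{mv}(SCT_{mv},\ldots)[n+1]$ selects exactly those elements $origin(v_{mv})$ with $M_{n+1} \in p(v_{mv})$, which by construction are exactly those with $M_i \in p(v_{mv})$, i.e. the same element set as $SCT_{mv}[i]$; and it attaches a bookkeeping node/edge to $origin(v_{mv})$ iff $M_{n+1} \in u(v_{mv})$, again iff $M_i \in u(v_{mv})$. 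Hence $apply^v_{mv}(SCT_{mv},\ldots)[n+1] = SCT_{mv}[i]$ up to isomorphism including bookkeeping. For $t \in \{1,\ldots,n\}$, the projection is entirely unaffected by the new $version$ node, so $apply^v_{mv}(SCT_{mv},\ldots)[t] = SCT_{mv}[t]$.

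The second step is to invoke Theorem~\ref{the:mv_transformation_correctness}: $SCT_{mv} = trans^F_{mv}(init_F(comb(\Delta^{M_{\{1,...,n\}}})), adapt(\Gamma))$, so for all $t \in \{1,\ldots,n\}$ we have $SCT_{mv}[t] = trans^F(M_t,\Gamma) = SCT_t'$ up to isomorphism, where $SCT_t'$ is the result of a maximal rule application sequence $M_t \rightarrow^\Gamma SCT_t'$ (here I use, as the paper does, that for a deterministic TGG every maximal sequence yields the same result). Combining the two steps: for $t \le n$, $apply^v_{mv}(SCT_{mv},\ldots)[t] = SCT_{mv}[t] = SCT_t'$; for $t = n+1$, $apply^v_{mv}(SCT_{mv},\ldots)[n+1] = SCT_{mv}[i] = SCT_i'$, and since $(V^M_i,E^M_i,s^M_i,t^M_i) = (V^M_{n+1},E^M_{n+1},s^M_{n+1},t^M_{n+1})$ the maximal sequence for $M_i$ is also a maximal sequence for $M_{n+1}$, so $SCT_i' = SCT_{n+1}'$ up to isomorphism.

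The main obstacle I anticipate is making the first step fully rigorous: one must pin down precisely how $comb$ acts on $\Delta^{M_{\{1,...,n+1\}}}$ versus $\Delta^{M_{\{1,...,n\}}}$ and argue that the only change is the additional $version$ node and the extension of the $p$/$u$ sets along the new $suc$ edge — i.e. that $origin$ is unchanged on the shared nodes and no new multi-version nodes are created (because $M_{n+1}$ is an isomorphic copy of $M_i$, the model modification $M_i \leftarrow K \rightarrow M_{n+1}$ creates and deletes nothing). This is essentially a bookkeeping argument about the definitions in \cite{Barkowsky2022}, and the cleanest route is probably to show directly that $comb(\Delta^{M_{\{1,...,n+1\}}})$ with its $ucv$/$udv$ edges set up as in $init_F$ is isomorphic (including bookkeeping) to $apply^v_{mv}$ applied to $comb(\Delta^{M_{\{1,...,n\}}})$ after the same $init_F$ preparation — but we never need to re-run $trans^F_{mv}$, because the new version introduces no applicable rule match that was not already realized for version $i$. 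Everything else is a direct appeal to the previously proven results.
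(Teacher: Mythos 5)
Your proposal is correct and follows essentially the same route as the paper's proof: observe that $apply^v_{mv}$ only adds the new $version$ node and $suc$ edge so that projections to $t \in \{1,\ldots,n\}$ are unchanged and the $p$/$u$ sets for version $n+1$ coincide with those for version $i$, giving $apply^v_{mv}(SCT_{mv},\ldots)[n+1] = SCT_{mv}[i]$, and then conclude via the already-established batch correctness since $M_i$ and $M_{n+1}$ coincide. Your extra worry about re-analyzing $comb$ on the extended history is unnecessary (the theorem concerns $apply^v_{mv}$ applied to the already-transformed $SCT_{mv}$), but it does not detract from the argument.
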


\begin{proof}
Since $apply_v$ only introduces a new $version$ node for version $n + 1$ along with a single incoming $suc$ edge from the $version$ node for version $i$ and does not make any further changes, it follows that $\forall{t' \in \{1, ..., n\}}: apply^v_{mv}(SCT_{mv})[t] = SCT_{mv}[t]$. It also follows that $\forall{v \in V^{apply^v_{mv}(SCT_{mv})}}: ((n + 1)  \in p(v) \leftrightarrow i \in p(v)) \wedge ((n + 1)  \in u(v) \leftrightarrow i \in u(v))$ and hence $apply^v_{mv}(SCT_{mv})[n + 1] = SCT_{mv}[i]$. Thus, because $M_i$ and $M_{n+1}$ are isomorphic, the theorem holds.
\end{proof}

The creation of an element $x$ in a new version $M_i$ in the single-version case formally consists of simply adding the element to the set of nodes or edges of $M_i$ and adjusting source and target functions if $x$ is an edge. In a multi-version model representation, instead a new node $v_{mv}$ of the corresponding adapted type is created and connected to the $version$ node representing $M_i$ via a $cv$ edge. Furthermore, if $x$ is an edge, the related source and target edges are created.

Such a modification to a source model can be synchronized to a target model via the procedures $mark^F_c$ and $trans^F_{mv}$. $mark^F_c$ connects $v_{mv}$ to the $version$ node representing $M_i$ via a $ucv$ edge. $trans^{F}_{mv}$ applies the TGG's multi-version forward rules until a fixpoint is reached. The following theorem states that this yields a multi-version model triplet that correctly encodes the transformation results for all model versions, with $apply^+$ and $apply_{mv}^+$ the procedures for applying a creation modification to the original model respectively the multi-version model.

\begin{theorem} \label{the:sync_element_creation}
For a correct version history $\Delta^{M_{\{1,...,n\}}}$, an element $x$, a version $M_i$ such that $\nexists M_i \leftarrow K \rightarrow M_x \in \Delta^{M_{\{1,...,n\}}}$, and a TGG with set of forward rules $\Gamma$, it holds up to isomorphism including bookkeeping that
$\forall t \in \{1, ..., n\} \setminus \{i\}: trans^{F}_{mv}(mark^F_c(SCT_{mv}', x, i), adapt(\Gamma))[t] = SCT_t'
\wedge \\ trans^{F}_{mv}(mark^F_c(SCT_{mv}', e, i), adapt(\Gamma))[i] = SCT_i'$,
with $SCT_{mv} =  trans^{F}_{mv}(init_F(comb(\Delta^{M_{\{1,...,n\}}})), adapt(\Gamma))$, $SCT_{mv}' = apply_{mv}^{+}(SCT_{mv}, x, i)$, and $M_t \rightarrow^\Gamma SCT_t'$ and $apply^+(M_i, x) \rightarrow^\Gamma SCT_i'$ maximal rule application sequences.
\end{theorem}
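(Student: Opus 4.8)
The plan is to reduce this statement to the already-established batch correctness result (Theorem \ref{the:mv_transformation_correctness}, equivalently Corollary \ref{cor:mv_max_rule_application_sequence} together with Theorem \ref{the:correctness_bookkeeping_projection}) by carefully analyzing how the preparation step $mark^F_c$ affects the projection bookkeeping sets of the individual versions. The key observation is that, since $SCT_{mv} = trans^{F}_{mv}(init_F(comb(\Delta^{M_{\{1,...,n\}}})), adapt(\Gamma))$ is the result of a \emph{maximal} multi-version rule application sequence, by Corollary \ref{cor:mv_max_rule_application_sequence} each projection $SCT_{mv}[t]$ is (up to isomorphism including bookkeeping) the result of a maximal regular forward rule application sequence on $init_F(M_t)$; in particular, for a deterministic TGG, it contains no bookkeeping edges, so $B_{mv}(SCT_{mv}[t]) = SCT_{mv}[t]$ for every $t$. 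I would state this as the starting point.

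Next I would analyze $apply_{mv}^{+}$ and $mark^F_c$ at the level of projections. Adding the new node $v_{mv}$ with a single $cv$ edge to the version node for $M_i$, plus the source/target edges if $x$ is an edge, means that for $t \neq i$ the projection $SCT_{mv}'[t]$ is unchanged, i.e.\ $SCT_{mv}'[t] = SCT_{mv}[t]$ with the same (empty) bookkeeping, whereas $SCT_{mv}'[i] = apply^{+}(M_i, x)$ but still fully translated except that $v_{mv}$ — and only $v_{mv}$ — is freshly added and \emph{already} marked as translated in the $u$-sense (no $ucv$ edge yet). The procedure $mark^F_c$ then adds exactly one $ucv$ edge from $v_{mv}$ to the version node for $M_i$; hence after $mark^F_c$, the projection for $t \neq i$ still has empty bookkeeping, and the projection for version $i$ equals $apply^{+}(M_i, x)$ with a bookkeeping edge on precisely the element $origin(v_{mv}) = x$ and on nothing else. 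This is exactly the graph one obtains by taking the fully-transformed $SCT_i$ obtained from a maximal run on $init_F(M_i)$, adding $x$, and attaching a bookkeeping edge to $x$ — which is the natural "incremental starting point" for the single-version forward synchronization of the element-creation modification.

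From here the argument is a localized replay of the batch proof. By Theorem \ref{the:mv_rule_applicability} and Theorem \ref{the:mv_rule_application}, every multi-version forward rule application to $mark^F_c(SCT_{mv}', x, i)$ corresponds to a regular rule application on exactly those projections $t \in P$; since for $t \neq i$ the projection has no untranslated elements, no such rule application can have $t$ in its $P$ (its constraint $P \neq \emptyset$ together with the $u$-set conditions forces at least one untranslated element, which only version $i$ has), so projections for $t \neq i$ are untouched throughout and remain equal to $SCT_t'$ — which is already a maximal regular run result. For $t = i$, Corollary \ref{cor:mv_rule_application_sequence} and Corollary \ref{cor:mv_max_rule_application_sequence} give that the maximal multi-version run projects to a maximal regular forward run starting from $mark^F_c(\cdot)[i]$, i.e.\ from $apply^{+}(M_i, x)$ with $x$ marked untranslated and everything else translated; by determinism of the TGG this coincides (up to isomorphism) with any maximal run $apply^{+}(M_i, x) \rightarrow^\Gamma SCT_i'$. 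I expect the main obstacle to be the bookkeeping-set accounting in the second paragraph — precisely verifying that $mark^F_c$ introduces a bookkeeping edge on $x$ \emph{and on no other element}, so that $B_{mv}(mark^F_c(SCT_{mv}', x, i)[i])$ matches $B(apply^{+}(M_i,x))$ with $x$ untranslated — since this hinges on the invariant $u(v_{mv}) = p(v_{mv})$ being re-established only locally at $v_{mv}$ while all previously-translated elements keep empty $u$-sets; the rest then follows directly from the cited batch-transformation results applied to this restricted starting configuration.
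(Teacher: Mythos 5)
Your proposal follows essentially the same route as the paper's proof: projections for $t \neq i$ are unaffected by $apply^{+}_{mv}$, $mark^F_c$ and the subsequent multi-version rule applications, while the projection for version $i$ equals $apply^{+}(M_i, x)$ with exactly the element $x$ marked untranslated, so Corollaries \ref{cor:mv_rule_application_sequence} and \ref{cor:mv_max_rule_application_sequence} (together with determinism) yield the claim. One small correction: your assertion that the projections $SCT_{mv}[t]$ contain no bookkeeping edges does not follow from determinism alone (a source version need not be fully translatable), so the argument that no multi-version rule application after the mark can have $t \neq i$ in its $P$ should instead rest on maximality of the original multi-version sequence plus the observation that any genuinely new match must involve the new node $v_{mv}$, whose presence set is $\{M_i\}$.
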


\begin{proof}
Since neither $apply_{mv}^{+}$ nor $mark^F_{c}$ and consequently also not $trans^{F}_{mv}$ impact the result of the projection operation for any $t \in \{1, ..., n\} \setminus \{i\}$, the theorem holds for all $t \in \{1, ..., n\} \setminus \{i\}$. For version $i$, it follows from theorem \ref{cor:mv_rule_application_sequence} that there exists a rule application sequence $apply^+(M_i, \delta_+) \rightarrow^\Gamma mark^F_c(SCT'_{mv}, x, i)[i]$, since the projection contains a new, unmarked element corresponding to $x$ and is otherwise unchanged. From corollary \ref{cor:mv_max_rule_application_sequence} hence follows the correctness of the theorem.
\end{proof}

Deletion of an element $x$ from a new version $M_i$ corresponds to the removal of the respective element from the set of nodes or edges of $M_i$ and adjusting source and target functions if $x$ is an edge. To update a related multi-version model accordingly, a $dv$ edge from the node $v_{mv}$ representing $x$ to the $version$ node $m_i$ representing $M_i$ is created.

The procedures $mark^F_{d}$ and $trans^F_{mv}$ synchronize such a modification to a source model to a corresponding target model. $mark^F_{d}$ consists of two steps. First, if $M_i \in u(v_{mv})$, a $udv$ edge between $v_{mv}$ and $m_i$ is created. Otherwise, any correspondence node $c_{mv}$ adjacent to $v_{mv}$ with $M_i \in p(c_{mv})$ as well as any attached target model nodes are connected to $m_i$ via $dv$ edges. Then, this adjustment is transitively propagated to the correspondence node's dependent correspondence nodes that fulfill the presence condition and their attached target model nodes. A correspondence node is dependent on another if the image of the match of the rule application that created the dependent correspondence node contained the required correspondence node. Second, all source model nodes connected to an affected correspondence node other than $v_{mv}$ are connected to $m_i$ via a $ucv$ edge. Finally, multi-version forward rules are applied by $trans^F_{mv}$ until a fixpoint is reached. This yields correct transformation results, as stated by the following theorem, with $apply^-$ and $apply_{mv}^-$ the procedures for applying a deletion modification to the original model respectively the multi-version model.

\begin{theorem} \label{the:sync_element_deletion}
For a correct version history $\Delta^{M_{\{1,...,n\}}}$, an element $x$, a version $M_i$ such that $\nexists M_i \leftarrow K \rightarrow M_x \in \Delta^{M_{\{1,...,n\}}}$, and a TGG with set of forward rules $\Gamma$, it holds up to isomorphism including bookkeeping that
$\forall t \in \{1, ..., n\} \setminus \{i\}: trans^F_{mv}(mark^F_{d}(SCT'_{mv}, x, i), adapt(\Gamma))[t] = SCT_t'
\wedge trans^F_{mv}(mark^F_{d}(SCT'_{mv}, x, i), adapt(\Gamma))[i] = SCT_i'$,
with $SCT_{mv} =  trans^{F}_{mv}(init_F(comb(\Delta^{M_{\{1,...,n\}}})), adapt(\Gamma))$, $SCT_{mv}' = apply_{mv}^{-}(SCT_{mv}, x, i)$, and $M_t \rightarrow^\Gamma SCT_t'$ and $apply^-(M_i, x) \rightarrow^\Gamma SCT_i'$ maximal rule application sequences.
\end{theorem}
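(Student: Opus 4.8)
The plan is to follow the structure of the proof of Theorem~\ref{the:sync_element_creation}, splitting the argument into the unaffected versions $t \neq i$ and the modified version $i$, and in the latter case reducing to Corollary~\ref{cor:mv_max_rule_application_sequence} by exhibiting a (generally non-maximal) regular forward rule application sequence from $init_F(apply^-(M_i, x))$ to the projection $mark^F_d(SCT_{mv}', x, i)[i]$, so that the subsequent fixpoint iteration of $trans^F_{mv}$ becomes exactly a maximal extension of that sequence.

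For the versions $t \in \{1,\dots,n\} \setminus \{i\}$ I would first observe that $apply^-_{mv}$ adds only a single $dv$ edge to $m_i$ and that $mark^F_d$ adds only further $dv$, $udv$, and $ucv$ edges, all incident to $m_i$; since $m_i$ has no successor (by the assumption $\nexists M_i \leftarrow K \rightarrow M_x$), none of these operations changes the bookkeeping-sensitive projection for any $t \neq i$, so $mark^F_d(SCT_{mv}', x, i)[t] = SCT_{mv}[t]$. As deleting $x$ from $M_i$ leaves $M_t$ unchanged, Theorem~\ref{the:mv_transformation_correctness} gives $SCT_{mv}[t] = SCT_t'$ up to isomorphism including bookkeeping, and $SCT_t'$ contains no bookkeeping edges, so no forward rule is applicable to it. Corollary~\ref{cor:mv_max_rule_application_sequence}, applied to the maximal multi-version sequence produced by $trans^F_{mv}$, then yields that its $t$-projection is the result of a maximal regular sequence starting from $SCT_t'$, which is $SCT_t'$ itself.

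For version $i$, recall that $SCT_{mv}[i]$ equals, up to isomorphism including bookkeeping, the result $SCT_i$ of some maximal regular forward sequence $\sigma$ on $init_F(M_i)$. I would call a rule application in $\sigma$ \emph{affected} if $x$ lies in its match image or, transitively, if some correspondence node created by an affected application lies in its match image; this is precisely the dependency relation driving $mark^F_d$. The crux is a lemma asserting that (i) removing from $SCT_i$ exactly the correspondence and target elements created by affected applications and re-marking as untranslated exactly the source $L^T$-elements of those applications yields $mark^F_d(SCT_{mv}', x, i)[i]$ up to isomorphism including bookkeeping, and (ii) the subsequence $\sigma'$ of unaffected applications, in their original order, is a legal regular forward sequence on $init_F(apply^-(M_i, x))$ with the same result. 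For (ii), each unaffected application still admits its match: it references neither $x$ (else it would be affected) nor any correspondence node created by an affected application (else the dependency closure would make it affected), its remaining source context survives in $apply^-(M_i, x)$, and its bookkeeping preconditions hold because we perform exactly the unaffected applications in order; forward rules are productions, so the dangling condition is trivially met (as in the proof of Theorem~\ref{the:mv_rule_applicability}) and no conflicts arise. (If $M_i \in u(v_{mv})$, i.e.\ $x$ was never translated in version $i$, no application is affected and the claim is immediate.) Given the lemma, there is a sequence $init_F(apply^-(M_i, x)) \rightarrow^{\Gamma} mark^F_d(SCT_{mv}', x, i)[i]$; applying Corollary~\ref{cor:mv_max_rule_application_sequence} to the maximal multi-version sequence of $trans^F_{mv}$ and prepending $\sigma'$ shows that $trans^F_{mv}(mark^F_d(SCT_{mv}', x, i), adapt(\Gamma))[i]$ is the result of a maximal regular forward sequence on $init_F(apply^-(M_i, x))$, which by determinism of the TGG equals $SCT_i'$.

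The main obstacle is the lemma of the preceding paragraph: proving that the $dv$/$ucv$ propagation carried out by $mark^F_d$ is exactly the graph-level effect of undoing the dependency-closed set of affected forward rule applications, and that the survivors still compose into a legal sequence on the reduced source model. This rests on a careful analysis of what a forward rule match may reference; here the TGG structural assumptions (each source or target element attached to exactly one correspondence node, each rule creating exactly one correspondence node) are essential, since they guarantee that inter-application dependencies are mediated solely by correspondence nodes, so the transitive closure used by $mark^F_d$ indeed captures all consequences of removing $x$. Rather than re-deriving this from scratch, the cleanest route is arguably to reduce it to the established correctness of single-version TGG-based incremental synchronization \cite{giese2009model}: show via the bookkeeping-sensitive projection that $apply^-_{mv}$ followed by $mark^F_d$ produces, on version $i$, exactly the intermediate state that the standard single-version undo phase computes for the deletion of $x$ from $M_i$, while leaving every other version in its already-synchronized state; the remaining forward propagation then follows as above from Corollary~\ref{cor:mv_max_rule_application_sequence}.
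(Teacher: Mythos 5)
Your proposal follows essentially the same route as the paper's proof: the unaffected versions $t \neq i$ are handled by observing that $apply^-_{mv}$, $mark^F_d$, and the subsequent fixpoint iteration leave their projections unchanged, and version $i$ is handled by arguing that $mark^F_d(SCT'_{mv},x,i)[i]$ is exactly the result of structurally undoing the (transitively) affected forward rule applications, so that a regular sequence $init_F(apply^-(M_i,x)) \rightarrow^\Gamma mark^F_d(SCT'_{mv},x,i)[i]$ exists and Corollary~\ref{cor:mv_max_rule_application_sequence} concludes. The dependency-closure lemma you single out as the crux is precisely what the paper asserts in one sentence (``effectively structurally undoing the rule applications that created these elements''), so your version is the same argument with that step made explicit.
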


\begin{proof}
Since neither $apply_{mv}^{-}$ nor $mark^F_{d}$ impact the result of the projection operation for any $t \in \{1, ..., n\} \setminus \{i\}$, the theorem holds for all $t \in \{1, ..., n\} \setminus \{i\}$. For version $i$, $mark^F_{d}(SCT'_{mv}, e, i)[i]$ no longer contains the element corresponding to $x$, any directly attached or dependent correspondence node, and any related target elements, effectively structurally undoing the rule applications that created these elements. Furthermore, in the projection, all source elements covered by a previously present correspondence node are now marked. Since the projection is otherwise unchanged, it follows from theorem \ref{cor:mv_rule_application_sequence} that there exists a rule application sequence $apply^-(M_i, x) \rightarrow^\Gamma mark^F_{d}(SCT'_{mv}, x, i)[i]$. The correctness of the theorem then follows from corollary \ref{cor:mv_max_rule_application_sequence}.
\end{proof}

Essentially, the synchronization procedures for element creation and deletion thus realize a synchronization similar to the technique described in \cite{giese2009model} in the context of multi-version models.

We consider a merge of two versions for single-version models to be formally represented by the introduction of two new model modifications $M_i \leftarrow K \rightarrow M_{n+1}$ and $M_j \leftarrow K \rightarrow M_{n+1}$ with $M_{n+1} \subseteq M_i \cup M_j$ into a version history $\Delta^{M_{\{1,...,n\}}}$. In a multi-version model encoding, this modification corresponds to the creation of a new $version$ node for $M_{n+1}$, $m_{n+1}$, along with incoming $suc$ edges from the $version$ nodes for $M_i$ and $M_j$. Furthermore, $dv$ edges to $m_{n+1}$ from all nodes representing elements in $M_i$ and $M_j$ that are not in $M_{n+1}$ are created.

Synchronization of such a change to a source version history is achieved via the procedures $mark^F_{m}$ and $trans^F_{mv}$. $mark^F_{m}$ performs three steps: First, similarly to how deletion modifications for new versions are handled, any correspondence node $c_{mv}$ with $M_{n+1} \in p(c_{mv})$ adjacent to a node $v_{mv}$ representing a deleted element, as well as any attached target model nodes are connected to $m_{n+1}$ via $dv$ edges and this adjustment is transitively propagated to the correspondence node's dependent correspondence nodes that fulfill the presence condition and their attached target model nodes. For any such $v_{mv}$ with $M_{n+1} \in u(v_{mv})$, a $udv$ edge between $v_{mv}$ and $m_{n+1}$ is added. Second, for each source model element that would be connected to more than one correspondence node in the projection to $n + 1$ (one of which has to be present in the projection to $i$ and the other in the projection to $j$), a $dv$ edge to $m_{n+1}$ is added to the correspondence node present in the projection to $j$ and this adjustment is again propagated to dependent correspondence nodes and associated target model elements. Third, all source model nodes $v'_{mv}$ connected to any affected correspondence node with $m_{n+1} \in p(v'_{mv})$ are connected to $m_{n+1}$ via a $ucv$ edge. $trans^F_{mv}$ then applies the TGG's multi-version forward rules until a fixpoint is reached. This yields a correct multi-version encoding of the transformation results for all model versions, with $apply^m_{mv}$ the procedure for applying the merge modification to the multi-version model.

\begin{theorem} \label{the:sync_version_merge}
For a correct version history $\Delta^{M_{\{1,...,n\}}}$, an extended version history $\Delta^{M_{\{1,...,n + 1\}}} = \Delta^{M_{\{1,...,n\}}} \cup \{M_i \leftarrow K_i \rightarrow M_{n+1}, M_j \leftarrow K_j \rightarrow M_{n+1}\}$ with $i, j \in \{1,...,n\}$, $i \neq j$, and $M_{n+1} \subseteq M_i \cup M_j$, and a TGG with set of forward rules $\Gamma$, it holds up to isomorphism including bookkeeping that
$\forall t \in \{1, ..., n + 1\} : trans^F_{mv}(mark^F_m(SCT'_{mv}, M_i \leftarrow K_i \rightarrow M_{n+1}, M_j \leftarrow K_j \rightarrow M_{n+1}), adapt(\Gamma))[t] = trans^F(M_t, \Gamma)$,
with $SCT_{mv} =  trans^{F}_{mv}(init_F(comb(\Delta^{M_{\{1,...,n\}}})), adapt(\Gamma))$, $SCT'_{mv} = apply^m_{mv}(SCT_{mv}, M_i \leftarrow K_i \rightarrow M_{n+1}, M_j \leftarrow K_j \rightarrow M_{n+1})$, and $M_t \rightarrow^\Gamma SCT_t'$ a maximal rule application sequence.
\end{theorem}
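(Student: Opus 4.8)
The proof follows the pattern of Theorems~\ref{the:sync_version_creation}--\ref{the:sync_element_deletion}: I would split the claim according to whether $t \le n$ or $t = n+1$ and reduce both cases to Corollary~\ref{cor:mv_max_rule_application_sequence}. The key observation for the first case is that $apply^m_{mv}$ and all three steps of $mark^F_m$ only add the fresh $version$ node $m_{n+1}$, its two incoming $suc$ edges, and $dv$, $udv$, and $ucv$ edges whose target is $m_{n+1}$; in particular nothing is removed, and nothing is attached to the $version$ nodes $m_1, \dots, m_n$. Consequently, for every $t \le n$ the bookkeeping-sensitive projection is unchanged, i.e.\ $mark^F_m(SCT'_{mv}, \dots)[t] = SCT_{mv}[t]$, and by Theorem~\ref{the:mv_transformation_correctness} this is already a maximal (fixpoint) state with $SCT_{mv}[t] = trans^F(M_t, \Gamma)$. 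Applying Corollary~\ref{cor:mv_max_rule_application_sequence} with starting point $mark^F_m(SCT'_{mv}, \dots)$, the multi-version fixpoint reached by $trans^F_{mv}$ projects, for each $t$, to the endpoint of a maximal regular rule application sequence from $mark^F_m(SCT'_{mv}, \dots)[t]$; for $t \le n$ this starting point is already maximal, so the projection is left unchanged and the claim holds for all $t \le n$.

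For $t = n+1$, the plan is to show that $mark^F_m(SCT'_{mv}, \dots)[n+1]$ is itself the endpoint of some (not necessarily maximal) regular rule application sequence $init_F(M_{n+1}) \rightarrow^{\Gamma} mark^F_m(SCT'_{mv}, \dots)[n+1]$, up to isomorphism including bookkeeping. Given this, concatenation with the maximal tail supplied by Corollary~\ref{cor:mv_max_rule_application_sequence} yields a maximal regular sequence from $init_F(M_{n+1})$, and determinism of the TGG forces its endpoint to be $trans^F(M_{n+1}, \Gamma)$, settling the remaining case. To prove the reachability claim I would first characterize the projection before marking: since $M_{n+1} \subseteq M_i \cup M_j$, the source part of $apply^m_{mv}(SCT_{mv}, \dots)[n+1]$ is exactly $M_{n+1}$, all of whose elements are marked translated (the original $trans^F_{mv}$ consumed all bookkeeping and $apply^m_{mv}$ adds no $ucv$/$udv$ edges), while the correspondence and target parts consist of all rule-application traces inherited from the transformations of $M_i$ and $M_j$ that remain mv-present for $n+1$. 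This is in general not a valid intermediate forward-transformation state of $M_{n+1}$, for two reasons, and I would argue that the three steps of $mark^F_m$ repair exactly these: step~1 together with its dependency propagation un-marks for $n+1$ every correspondence node that references a source element deleted by the merge, along with its dependents and their target elements, and re-marks the still-present source elements those traces covered; step~2 resolves the case of a source element covered by two inherited correspondence nodes (one surviving in the projection to $i$, one in the projection to $j$) by un-marking the one on the $j$-side, again with dependents, targets, and re-marking; step~3 finally ensures that every source node covered only by a removed trace is marked untranslated in the projection to $n+1$. After these steps every remaining correspondence node references only elements of $M_{n+1}$, every source element of $M_{n+1}$ is covered by at most one remaining correspondence node, and the set of remaining traces is downward closed under the dependency relation (removal is propagated to dependents), which is acyclic since it refines the creation order of the original transformations. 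Hence the remaining traces can be linearly ordered respecting dependencies, and replaying the corresponding forward rules in that order from $init_F(M_{n+1})$ is a valid sequence — the dangling condition is trivial because forward rules are productions, the required correspondence context is present by downward closure, and each translated source element is present and still untranslated at the moment of its rule application by the uniqueness established in step~2 — whose endpoint coincides with $mark^F_m(SCT'_{mv}, \dots)[n+1]$ including bookkeeping.

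I expect the reachability claim for version $n+1$ to be the main obstacle, and within it the precise bookkeeping of the interaction between the three marking steps: one must verify that steps~1 and~2 do not interfere destructively (e.g.\ a trace removed in step~1 being one side of a double coverage relevant to step~2), that the dependency propagation captures exactly the traces that become invalid and no others, and that after all three steps the set of source elements marked translated is precisely the set covered by the surviving traces — so that the replayed sequence reproduces not only the graph structure but also the bookkeeping up to isomorphism. Everything else (the reduction to Corollary~\ref{cor:mv_max_rule_application_sequence}, the use of determinism, and the handling of versions $t \le n$) is a routine adaptation of the arguments already used for Theorems~\ref{the:sync_element_creation} and~\ref{the:sync_element_deletion}.
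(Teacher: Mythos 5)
Your proposal follows essentially the same route as the paper: for $t \le n$ the projections are unchanged by $apply^m_{mv}$ and $mark^F_m$, and for $t = n+1$ one argues that the marking steps (transitive removal of invalidated correspondence/target elements, elimination of doubly-covering correspondence nodes, and re-marking of freed source elements) leave a state reachable from $init_F(M_{n+1})$ by a regular forward-rule sequence, after which Corollary~\ref{cor:mv_max_rule_application_sequence} and determinism close the argument. The ``main obstacle'' you flag (verifying that the three marking steps interact correctly and reproduce the bookkeeping exactly) is precisely the point the paper itself treats only by brief assertion, so your outline is, if anything, more explicit than the published proof.
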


\begin{proof} 
Since $apply^m_{mv}$ only introduces a new $version$ node for version $n + 1$ along with two incoming $suc$ edges from the $version$ nodes for versions $i$ and $j$ and hence neither $apply^m_{mv}$ nor $mark^F_{m}$ and consequently also not $trans^F_{mv}$ impact the result of the projection operation for any $t' \in \{1, ..., n\}$, it follows that $\forall t \in \{1, ..., n\} : trans^F_{mv}(mark^F_m(SCT'_{mv}, M_i \leftarrow K_i \rightarrow M_{n+1}, M_j \leftarrow K_j \rightarrow M_{n+1}), adapt(\Gamma))[t] = trans^F(M_t, \Gamma)$.

Furthermore, since $dv$ edges to the $version$ node corresponding to $M_{n+1}$ are transitively added to any attached correspondence node that might otherwise be present in $mark^F_m(SCT'_{mv}, M_i \leftarrow K_i \rightarrow M_{n+1}, M_j \leftarrow K_j \rightarrow M_{n+1})[n+1]$, dependent correspondence nodes, and associated target elements, if any of its associated source elements is not present anymore, it essentially structurally undoes the related forward rule applications that created these elements. It thus follows that for any correspondence node and target element remaining in $mark^F_m(SCT'_{mv}, M_i \leftarrow K_i \rightarrow M_{n+1}, M_j \leftarrow K_j \rightarrow M_{n+1})[n + 1]$, there must exist a sequence of applications of rules from $\Gamma$ that creates the correspondence node and its target elements.

Furthermore, since $dv$ edges are similarly added for redundant correspondence nodes, it follows that no two correspondence nodes and associated target elements in $mark^F_m(SCT'_{mv}, M_i \leftarrow K_i \rightarrow M_{n+1}, M_j \leftarrow K_j \rightarrow M_{n+1})[n + 1]$ can be created by separate rule applications that require the deletion of new bookkeping edges for the same source element. Since rules from $\Gamma$ are productions except for the bookkeeping mechanism, the application of one rule can only prevent the application of another rule via the bookkeeping mechanism. Thus, because of the adjustment of the set $u(v)$ for any impacted node $v \in V^{SCT_{mv}}$, there must exist a sequence of rule applications $M_{n+1} \rightarrow^R apply^m_{mv}(SCT_{mv}, M_i \leftarrow K_i \rightarrow M_{n+1}, M_j \leftarrow K_j \rightarrow M_{n+1})[t]$.

From corollary \ref{cor:mv_max_rule_application_sequence} then follows the correctness of the theorem.
\end{proof}

From the correctness of the synchronization procedures for the considered types of modifications to a version history follows that a sequence of such modifications can be handled via a procedure $sync^F_{mv}$, which performs the appropriate synchronization for each modification in the sequence in order, with $apply$ the procedure for applying a sequence of modifications to the original model one after another.

\begin{theorem}
For a correct version history $\Delta^{M_{\{1,...,n\}}}$, a sequence of version creation, element creation, element deletion, and merge modifications $S_\delta$ creating new versions $M_{n+1}, ..., M_{n + m}$, and a TGG with set of forward rules $R$, it holds up to isomorphism that
$\forall t \in \{1, ..., n + m\} : sync^F_{mv}(SCT_{mv}, S_\delta, adapt(\Gamma))[t] = trans^F(M'_t, \Gamma)$ if $trans^F(M_t, \Gamma)$ contains no bookkeeping edges,
with $SCT_{mv} =  trans^{F}_{mv}(init_F(comb(\Delta^{M_{\{1,...,n\}}})), adapt(\Gamma))$ and $\Delta^{M_{\{1, ..., n + m\}}} = apply(\Delta^{M_{\{1,...,n\}}}, S_\delta)$.
\end{theorem}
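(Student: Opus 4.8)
The plan is to proceed by induction on the length of the modification sequence $S_\delta$. For the base case of the empty sequence, $sync^F_{mv}(SCT_{mv}, S_\delta, adapt(\Gamma)) = SCT_{mv} = trans^{F}_{mv}(init_F(comb(\Delta^{M_{\{1,...,n\}}})), adapt(\Gamma))$ and the extended history coincides with $\Delta^{M_{\{1,...,n\}}}$, so the claim is exactly Theorem \ref{the:mv_transformation_correctness}.

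For the inductive step I would strengthen the induction hypothesis to additionally assert that the intermediate result is itself a fresh batch transformation, i.e. that $sync^F_{mv}(SCT_{mv}, S_\delta'', adapt(\Gamma))$ equals $trans^{F}_{mv}(init_F(comb(\Delta'')), adapt(\Gamma))$ up to isomorphism including bookkeeping, where $\Delta'' = apply(\Delta^{M_{\{1,...,n\}}}, S_\delta'')$ is the history after the prefix $S_\delta''$. This strengthening is needed because Theorems \ref{the:sync_version_creation}, \ref{the:sync_element_creation}, \ref{the:sync_element_deletion}, and \ref{the:sync_version_merge} each assume their input multi-version model triplet to be a freshly batch-transformed one, whereas in the induction the input is the output of the previous synchronization steps. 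Given the strengthened hypothesis, I would do a case analysis on the kind of the next modification $\delta$ — version creation, element creation, element deletion, or merge — and invoke the corresponding one of the four synchronization theorems, with $\Delta''$ playing the role of $\Delta^{M_{\{1,...,n\}}}$, to conclude that $sync^F_{mv}(SCT_{mv}, S_\delta'' \cdot \delta, adapt(\Gamma))[t] = trans^F(M'_t, \Gamma)$ for every version $t$ of the extended history $\Delta''' = apply(\Delta'', \delta)$; here determinism of the TGG and the assumption that the batch results contain no bookkeeping edges are used to identify the maximal rule application sequence results appearing in those theorems with $trans^F(\cdot, \Gamma)$.

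To close the strengthened induction I would then argue that the synchronization result is again a fresh batch transformation up to isomorphism including bookkeeping: by construction the processing of $\delta$ ends with $trans^{F}_{mv}$, i.e. exhaustive application of multi-version forward rules, and by Corollary \ref{cor:mv_max_rule_application_sequence} together with the fact just established that every projection is the unique maximal rule application sequence result, the synchronization result and the fresh batch transformation $trans^{F}_{mv}(init_F(comb(\Delta''')), adapt(\Gamma))$ have the same projection — including bookkeeping — for every version of $\Delta'''$; since a multi-version model triplet is determined up to isomorphism including bookkeeping by its version graph together with its projections (faithfulness of the $comb$/$proj$ encoding of \cite{Barkowsky2022}), the two coincide up to isomorphism, which re-establishes the strengthened hypothesis for $S_\delta'' \cdot \delta$.

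The main obstacle I anticipate is exactly this bridging step between the individual synchronization theorems, which are phrased only for inputs that are fresh batch transformations, and the iterated setting of $sync^F_{mv}$; making the strengthened hypothesis work hinges on the faithfulness of the multi-version encoding with respect to projections and on determinism of the TGG to pin intermediate states down uniquely. A secondary point requiring care is the well-formedness of $S_\delta$ along the way: element creation and deletion modifications target a leaf version $M_i$ with no outgoing model modification, and one must check that this precondition of Theorems \ref{the:sync_element_creation} and \ref{the:sync_element_deletion} is preserved by the earlier modifications in the sequence, which holds because editing a leaf version keeps it a leaf while branching and merging are only performed on versions that are subsequently left untouched.
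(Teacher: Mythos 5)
Your proposal is correct and follows essentially the same route as the paper, whose proof is simply a one-line appeal to Theorems \ref{the:sync_version_creation}--\ref{the:sync_version_merge} together with TGG determinism. Your induction over $S_\delta$ with the strengthened hypothesis (that each intermediate state coincides, up to isomorphism including bookkeeping, with a fresh batch transformation of the updated history) is precisely the elaboration the paper leaves implicit, so it is a faithful, more detailed rendering of the intended argument rather than a different one.
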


\begin{proof}
Follows from the correctness of theorems \ref{the:sync_version_creation}-\ref{the:sync_version_merge} and the determinism property for TGGs.
\end{proof}

\section{Evaluation} \label{sec:evaluation}

In order to evaluate our approach empirically with respect to execution time performance and memory consumption, we have realized the presented concepts in the MoTE2 tool \cite{hildebrandt2014} for TGG-based model transformation, which is implemented in the context of the Java-based Eclipse Modeling Framework \cite{emf} and has been shown to be efficient compared to other model transformation tools \cite{hildebrandt2014}.

As an application scenario, we consider the transformation of Java abstract syntax graphs to class diagrams. We have therefore modeled this transformation as a TGG with MoTE2 and use the original and our adapted implementation to automatically derive forward rules respectively multi-version forward rules according to Section \ref{sec:mv_rules_derivation}, as well as the single-version forward synchronization rules employed by MoTE2.

To obtain realistic source models, we have extracted the version history of one small Java project (\emph{rete}, about 60 versions) and one larger open source Java project (\emph{henshin} \cite{arendt2010henshin}, about 2600 versions) from their respective Git \cite{git} repositories and have constructed the corresponding history of the related abstract syntax graphs using the MoDisco tool \cite{bruneliere2010modisco}. As input for the solution presented in Sections \ref{sec:mv_rules_derivation}, \ref{sec:mv_transformation_execution}, and \ref{sec:mv_incremental_execution}, we have consolidated both version histories into multi-version models using a mapping based on hierarchy and naming.\footnote{Our implementation and datasets are available under\cite{implementation} respectively \cite{barkowsky_matthias_2023_8109856}.} Based on this, we experiment with two application scenarios for model transformation in the context of models with multiple versions.

\subsection{Batch Transformation Scenario}

First, we consider a batch transformation scenario, where all versions of the Java abstract syntax graph are to be translated to their corresponding class diagram. This scenario emulates a situation where a new model along with a transformation to derive it from an existing model is introduced to an ongoing development process.

We therefore run the following model transformations for both repositories and measure the overall execution time and memory consumption of the involved models\footnote{All experiments were performed on a Linux SMP Debian 4.19.67-2 machine with Intel Xeon E5-2630 CPU (2.3\,GHz clock rate) and 386\,GB system memory running OpenJDK version 11.0.6. Reported execution time and memory consumption measurements correspond to the mean result of 10 runs of the respective experiment. Memory consumption measurements were obtained using the Java Runtime class.}:

\begin{itemize}
\item \textbf{SVM B}: individual forward transformation of all model versions in the version history using the original MoTE2 implementation
\item \textbf{MVM B}: joint forward transformation of all model versions in the version history using a multi-version model encoding and our implementation of the technique presented in Sections \ref{sec:mv_rules_derivation} and \ref{sec:mv_transformation_execution}
\end{itemize}

Note that the SVM B strategy would require initial projection operations and a final combination of transformation results to work within the framework of multi-version models. However, for fairness of comparison of the transformation, we do not consider these additional operations in our evaluation. We however do consider initialization effort for each model version required by the MoTE2 engine.

To investigate scalability, we also execute the transformations for subsets of the version history, that is, only for a subset of all model versions in the case of SVM B and for a multi-version model encoding only a subset of all versions in the case of MVM B. Figure \ref{fig:times_batch} shows the execution times of the transformations using the two strategies for subsets of different size and how the measured execution times are composed of time required to initialize the MoTE2 engine (init), indexing the models for pattern matching by the employed pattern matching tool (index), and actual execution of forward rules (execute). Except for the two smaller subsets of the version history of the smaller repository, the transformation based on multi-version models requires less time than the transformation of the individual model versions using the original MoTE2 tool, with the most pronounced improvement for the full history of the larger repository.

\begin{figure}
\centering
\includegraphics[width=0.8\linewidth]{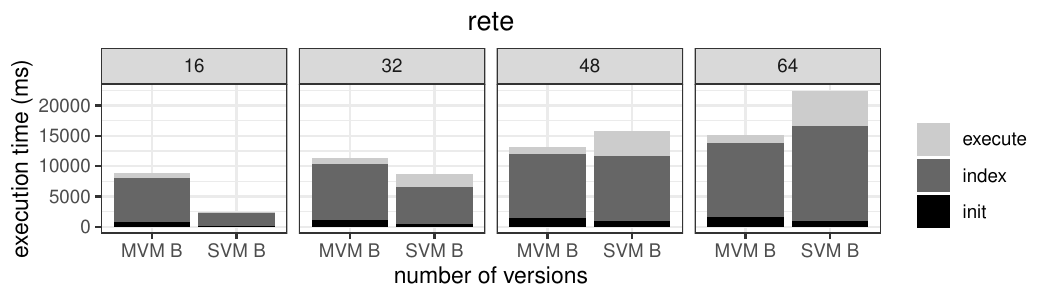}

\includegraphics[width=0.8\linewidth]{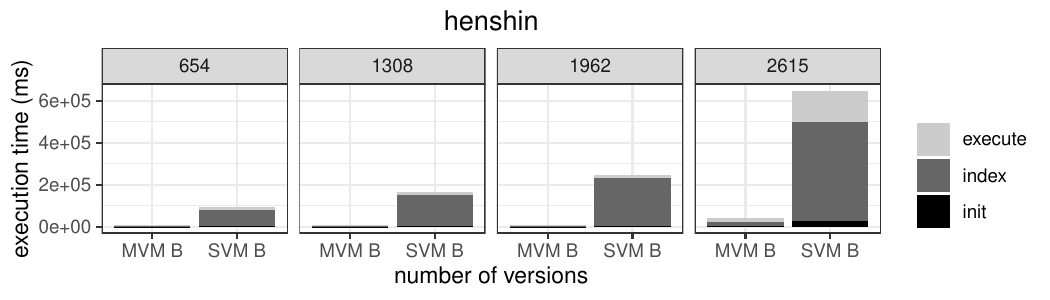}
\caption{Execution time measurements for the batch scenario}\label{fig:times_batch}
\end{figure}

The improvement in efficiency and scalability for larger version histories can be explained by the fact that many elements in the abstract syntax graphs of the repositories are shared between many versions. SVM B has to perform a separate transformation, including separate pattern matching and indexing, for each model version. In contrast, MVM B only performs a transformation including pattern matching over a single multi-version model, along with efficient search operations over the version graph, and only requires indexing once. For the larger multi-version models, this effect outweighs the higher initialization effort for the MVM variant and the increase in pattern matching and indexing effort resulting from the technically less efficient encoding of edges and attributes as nodes in the multi-version model.

To investigate the memory consumption of the different encodings of the model triplets produced by the transformation, we stored the produced triplets and loaded them into memory in a second experiment. The resulting memory measurements can be found in Figure \ref{fig:memory}, which shows that the multi-version encoding was more compact than the corresponding na\"ive encoding for all considered subsets of the larger repository's history, while it required more memory for the three smaller subsets of the smaller repository's version history. The overhead can be explained by the less efficient encoding of edges and attributes as nodes in a multi-version model, which for larger histories is however outweighed by the reduction in redundancy. With a memory consumption of about 200\,MB and 400\,MB, the multi-version model encoding in both cases is compact enough to fit into a regular PC's main memory.

\begin{figure}
\centering
\includegraphics[width=0.8\linewidth]{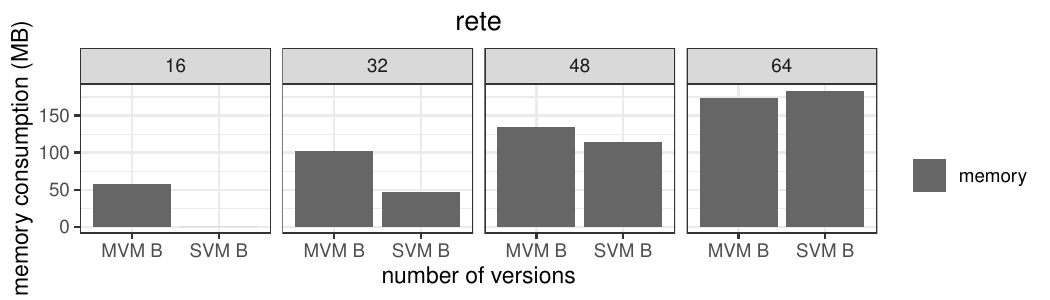}

\includegraphics[width=0.8\linewidth]{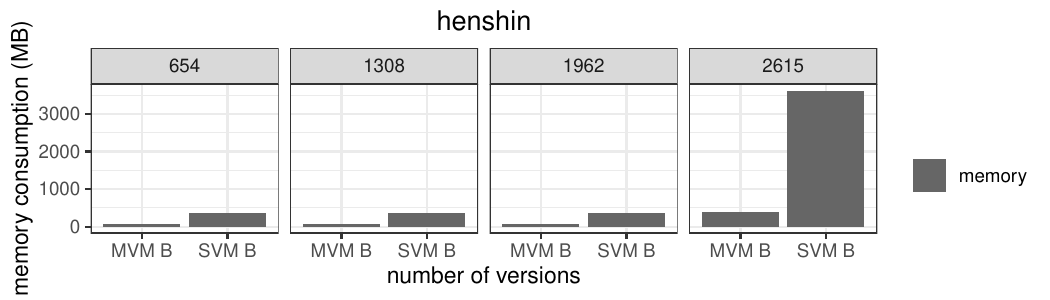}
\caption{Memory consumption measurements}\label{fig:memory}
\end{figure}

Notably, memory consumption for the three smaller subsets of the larger repository's version history is very similar. This is due to the fact that many of the models for versions between version 600 and 2000 are actually almost empty as the repository does not contain a project with the given name. This is also the case for the first 15 versions in the smaller repository, which causes memory consumption for the smallest version history subset to fall below the accuracy threshold of the methods used to measure it. However, due to a quirk of the MoTE2 engine, which performs pattern matching over the input TGG and the engine itself, even these empty models cause some indexing and initialization effort, which explains the execution time measurements in Figure \ref{fig:times_batch}.

The measurements overall indicate that at least in our current implementation, encoding edges and attributes as nodes in a multi-version model causes substantial overhead regarding both execution time and memory consumption. However, the results also confirm that joint transformation of all encoded versions can significantly improve performance compared to the separate transformation of each individual model version for large histories with many shared elements between model versions, more than compensating for the overhead caused.

\subsection{Incremental Synchronization Scenario}

As a second scenario, we consider the incremental synchronization of changes between pairs of versions of the abstract syntax graphs. This scenario aims to emulate an ongoing development process where new versions of the abstract syntax graph are iteratively produced by user edits and the corresponding version history of the corresponding class diagram should continuously be updated to reflect the newly introduced versions.

We therefore incrementally rebuild the final models in the repositories' histories, starting with the initial version and iteratively applying the changes of the successor versions according to a topological sorting of the version DAG. In case of branching development, we split execution and consider each branch separately until the branches are merged again. After an initial batch transformation of the root version, we perform an incremental synchronization after the integration of each successor version and measure the related execution time. For synchronization, we consider the following techniques:

\begin{itemize}
\item \textbf{SVM I}: forward synchronization of changes modifying a direct predecessor version into each version in the history via the regular single-version synchronization by MoTE2
\item \textbf{MVM I}: forward synchronization of changes resulting from the iterative integration of each version in the history into a multi-version model encoding via the multi-version synchronization approach from Section \ref{sec:mv_incremental_execution}
\end{itemize}

Note that we assume that the version history of the involved models is to be preserved, that is, the introduction of a new version must not override the base version. For the synchronization working with single-version models, this means that in order to create a new version of source and target model, the base version of both models along with the connecting correspondence model has to be copied before changes can be applied and the synchronization can be executed. We therefore consider the time required for this copying effort in the execution time measurements for SVM I. In contrast, this is not required for the MVM I strategy, which natively preserves the full version history by only allowing the types of changes described in Section \ref{sec:mv_incremental_execution}.

The aggregate execution times for synchronizing all versions up to the n-th version in the topological sorting of the version graph are plotted in Figure \ref{fig:times_incremental}. For the smaller repository, SVM I outperforms MVM I by about factor 1.5 even after considering the required copying. For the larger repository, the execution time of MVM I is comparable to the execution time of SVM I including copying.

The plots demonstrate that SVM I requires some computational effort even for versions without changes due to the copying of the base version. In principle, MVM I only has to perform computations for actual changes of the source model, but requires substantially more time to process such a change than SVM I due to the less efficient encoding of nodes and edges and the fact that in some cases, synchronization of a change requires traversal of larger parts of the version DAG, potentially causing effort in the version DAG's size. To reduce the required effort for the latter, we employ a simple indexing structure for the version DAG, the updating of which theoretically causes effort in the version DAG's size whenever a new version is introduced. However, the required execution time was below the granularity threshold of milliseconds in our experiments and is hence not visible in the plots.

Lastly, even without a significant gain in performance for incremental synchronization, the usage of the multi-version model encoding of the version histories in both the batch and incremental scenario means that analyses that are specific to this representation or more efficient there, such as those presented in \cite{Barkowsky2022}, can directly be executed over the transformation results without requiring a change of encoding.

\begin{figure}
\centering
\includegraphics[width=0.8\linewidth]{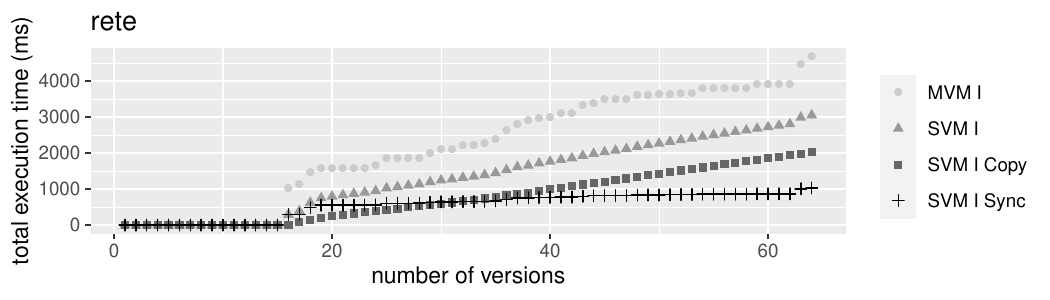}

\includegraphics[width=0.8\linewidth]{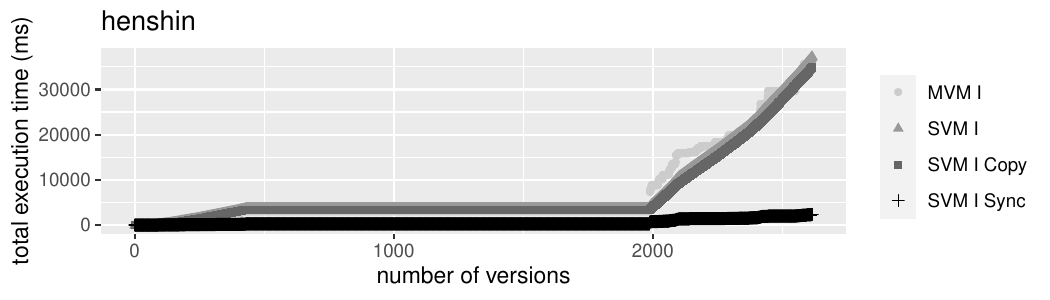}
\caption{Execution time measurements for the incremental scenario (larger repository)}\label{fig:times_incremental}
\end{figure}

\subsection{Threats to Validity}

Threats to the internal validity of our experimental results include unexpected behavior of the Java virtual machine such as garbage collection. To address this threat, we have performed multiple runs of all experiments and report the mean measurement result, with the standard deviation of overall execution time and memory consumption always below 10\% of the mean value. To minimize the impact of the concrete implementation, we have realized our solution in the framework of the transformation tool we use for comparison and thereby largely use the same execution mechanism.

Measuring memory consumption of Java programs is known to be a challenge. While we attempted to improve the reliability of these measurements by performing multiple runs of each experiment and suggesting to the JVM to perform garbage collection before every measurement, the reported results are not necessarily accurate but can only serve as an indicator.

To mitigate threats to external validity, we use real-world models as the source models of the transformation. However, we remark that our results are not necessarily generalizable to different examples or application domains and make no quantitative claims regarding the performance of our approach.

\section{Related Work} \label{sec:related_work}

The general problem of model versioning has already been studied extensively, both formally \cite{diskin2009model,rutle2009category} and in the form of concrete tool implementations \cite{murta2008towards,koegel2010emfstore}. Several solutions employ a unified representation of a model's version history similar to multi-version models \cite{rutle2009category,murta2008towards}. However, due to the problem definition focusing on the management of different versions of a single model, model transformation based on a unified encoding is out of scope for these approaches.

There is also a significant body of previous work on synchronization of concurrently modified pairs of models using triple graph grammars \cite{xiong2013synchronizing,orejas2020incremental}. The focus of these works is the derivation of compatible versions of source and target model that respect the modifications to either of them. This paper aims to make a step in an orthogonal direction, namely towards allowing living with inconsistencies by enabling developers to temporarily work with multiple modified, possibly conflicting versions of source and target model.

Furthermore, there exist several approaches for optimizing the performance of incremental model synchronization with TGGs, for instance \cite{hildebrandt2014} and \cite{fritsche2021avoiding}. Since the underlying concepts are mostly orthogonal to the ideas related to multi-version models, an integration into the approach proposed in this paper is an interesting direction for future work.

In the context of software product lines, so-called 150\% models are employed to encode different configurations of a software system \cite{10.1007/11561347_28,10.1145/3377024.3377030}. In this context, Greiner and Westfechtel present an approach for propagating so-called variability annotations along trace links created by model transformations \cite{westfechtel2020extending}, explicitly considering the case of transformations implemented via TGGs. However, not integrating this propagation with the transformation process would mean that certain cases that are covered by our approach could not be handled, for instance if a model element would be translated differently in different model versions based on its context.

The joint execution of queries over multiple versions of an evolving model has been considered for both the case with \cite{Barkowsky2022} and without \cite{giese2019metric,sakizloglou2021incremental,garcia2019querying} parallel, branching development. This paper builds on these results, but instead of focusing on pure queries without side-effects considers the case of writing operations in the form of model transformations.

\section{Conclusion}\label{sec:conclusion}

In this paper, we have presented a step in the direction of model transformation and synchronization for multi-version models in the form of an adaptation of the well-known triple graph grammar formalism that enables the joint transformation of all versions encoded in a multi-version model as well as synchronization of subsequent updates. The presented approach is correct with respect to the translation semantics of deterministic triple graph grammars for individual model versions, that is, it produces equivalent results. Initial experiments for evaluating the efficiency of our approach demonstrate that our technique can improve performance of the transformation compared to a na\"ive realization, but can also cause significant computational overhead especially in the synchronization case, in a realistic application scenario.

In future work, we want to explore the possibility of improving the efficiency of multi-version model transformations via incremental pattern matching for multi-version models. Another interesting direction is the integration of advanced application conditions for the specification of triple graph grammar rules, such as nested graph conditions, into our approach. Finally, a more extensive evaluation can be conducted to further study the performance of the presented technique.

\section*{Acknowledgements}

This work was developed mainly in the course of the project modular and incremental Global Model Management (project number 336677879) funded by the DFG.

\clearpage

\bibliographystyle{IEEEtran}
\bibliography{references}

\end{document}